\newtheorem{theorem}{Theorem}
\begin{document}
\title{Characterization of higher-order topological superconductors using Bott indices}

\author{Xun-Jiang Luo}
\thanks{These authors contributed equally.}
\affiliation{School of Physics and Technology, Wuhan University, Wuhan 430072, China}
\author{Jia-Zheng Li}
\thanks{These authors contributed equally.}
\affiliation{School of Physics and Technology, Wuhan University, Wuhan 430072, China}
\author{Meng Xiao}
\email{phmxiao@whu.edu.cn}
\affiliation{School of Physics and Technology, Wuhan University, Wuhan 430072, China}
\affiliation{Wuhan Institute of Quantum Technology, Wuhan 430206, China}
\author{Fengcheng Wu}
\email{wufcheng@whu.edu.cn}
\affiliation{School of Physics and Technology, Wuhan University, Wuhan 430072, China}
\affiliation{Wuhan Institute of Quantum Technology, Wuhan 430206, China}

\begin{abstract}
The abundance of bulk and boundary topologies in higher-order topological phases offer remarkable tunability and diversity to boundary states but also pose a challenge to their unified topological characterization. In this work, we propose a theoretical framework to characterize time-reversal invariant topological superconductors hosting Majorana Kramers pairs (MKP) of corner states by using a series of spin Bott indices, which capture both bulk and boundary states topology. The developed invariants can characterize MKP in arbitrarily shaped systems and all distinct spatial distribution patterns of MKP. As an illustrative example, we apply our theory to analyze the Kane-Mele model with sublattice-dependent superconducting pairing potentials. In this representative model, both 
intrinsic and extrinsic higher-order topological superconductors can be realized and various patterns of MKP can be engineered through edge cleavage. Despite their high sensitivity to boundary terminations, MKP can be faithfully characterized by the proposed topological invariants. We further demonstrate the characterization of higher-order topological superconductors in the BDI symmetry class using Bott indices without resolving the spin degree of freedom.

\end{abstract}
\maketitle

\section{Introduction}
Higher-order topological phases, hosting gapless boundary states with codimension greater than one, have attracted extensive research interest in both theory \cite{Benalcazar2017a,Benalcazar2017,Khalaf2018,Geier2018,Zhang2019,Benalcazar2019,Wang2019,Trifunovic2019,Xu2019,Zhang2020b,Luo2021a,Luo2023,Lin2024,zhang2024topological,2024BTIluo,luo2025} and experiments \cite{20Li,Huang21s,Shumiya2022,WANG2022788,Xu23s,Hossain2024a,24Shafayat}. Higher-order topological superconductors (HOTSCs) featuring Majorana corner states \cite{Wang2018a,Zhu2019Second,Volpez2019,20zhang,2024Huang} are of particular interest because of their potential applications in topological quantum computation \cite{Nayak2008,Zhang2020c,Zhang2020d,Pan2022}. 
Various systems have been proposed to realize HOTSCs, such as superconducting-proximitized topological insulators \cite{Yan2018,Wang2018,Pan2019,Wu2020a,Tan2022}, odd-parity superconductors \cite{Yan2019,Ahn2020,Hsu2020,Huang2021}, and iron-based superconductors \cite{Zhang2019a,Wu2019,Wu2020,Chen2021a,Zhang2021,22Qin}.
In contrast to first-order topology, the topological boundary states of HOTSCs typically depend on not only  bulk states topology but also the boundary terminations \cite{Geier2018}.
The abundant bulk and 
 boundary topologies of HOTSCs endow boundary states with high tunability and richness but also
present the challenge of their unified topological characterizations \cite{19Benalcazar,22chen,Benalcazar2022,Luo2022,zhu2023,24Jahin,24Lin,2024zhuxiaoyu,zhu2024,Jiawei2024}. Several theoretical frameworks for characterizing higher-order topology have been developed, such as multipole moment \cite{Kang2019,Wheeler2019,Ono2019}, chiral multipole number \cite{Benalcazar2022}, and symmetry indicators \cite{Kruthoff2017,Po2017,Khalaf2018a,Skurativska2020}.
Despite these seminal works, a unified topological characterization of Majorana corner states remains elusive. A key difficulty is that the previously proposed invariants are defined by only bulk states
while Majorana corner states are sensitive to both bulk and boundary topologies \cite{Geier2018,22zhu}, which hinders establishing an exact correspondence between the proposed invariants and Majorana corner states. As such, a comprehensive topological characterization of HOTSCs must encompass both bulk and boundary state topologies, a challenge that existing frameworks have yet to fully address. Moreover, no current theory provides a complete characterization of HOTSCs with arbitrary shapes or captures the diverse real-space patterns of Majorana corner states.

Bott index, deeply rooted in $K$-theoretic invariants \cite{EXEL1991364,Hastings2010,LORING2015383,Loring2019AGT}, stands as a potent instrument and has been used to characterize various topological phases of matter \cite{Loring_2010,Titum2015,Huang181,YangYanBin2019,WangXS2020,LinLing2021,WangCitian2022}. These include topological insulators characterized by winding number \cite{EXEL1991364,LinLing2021}, Chern insulators \cite{Loring_2010,BottIndexTwo2021}, and quantum spin Hall insulators \cite{Huang181}. As a real space topological invariant, Bott index is defined by a couple of unitary matrices. When the unitary matrix is generated by the real space polynomial $e^{i2\pi x/L}$ and/or $e^{i2\pi y/L}$, the Bott indices can provide equivalent expressions for winding numbers or Chern numbers \cite{LinLing2021,BottIndexTwo2021}. Recently, the Bott index was used to characterize chiral symmetric higher-order topological insulators by generalizing the form of real space polynomial $e^{i2\pi f(\bm r)}$ \cite{Benalcazar2022,Jiazheng2024}. This break through motivates us to explore the general topological characterization of HOTSCs by Bott index.

 \begin{figure*}
\centering
\includegraphics[width=1\textwidth]{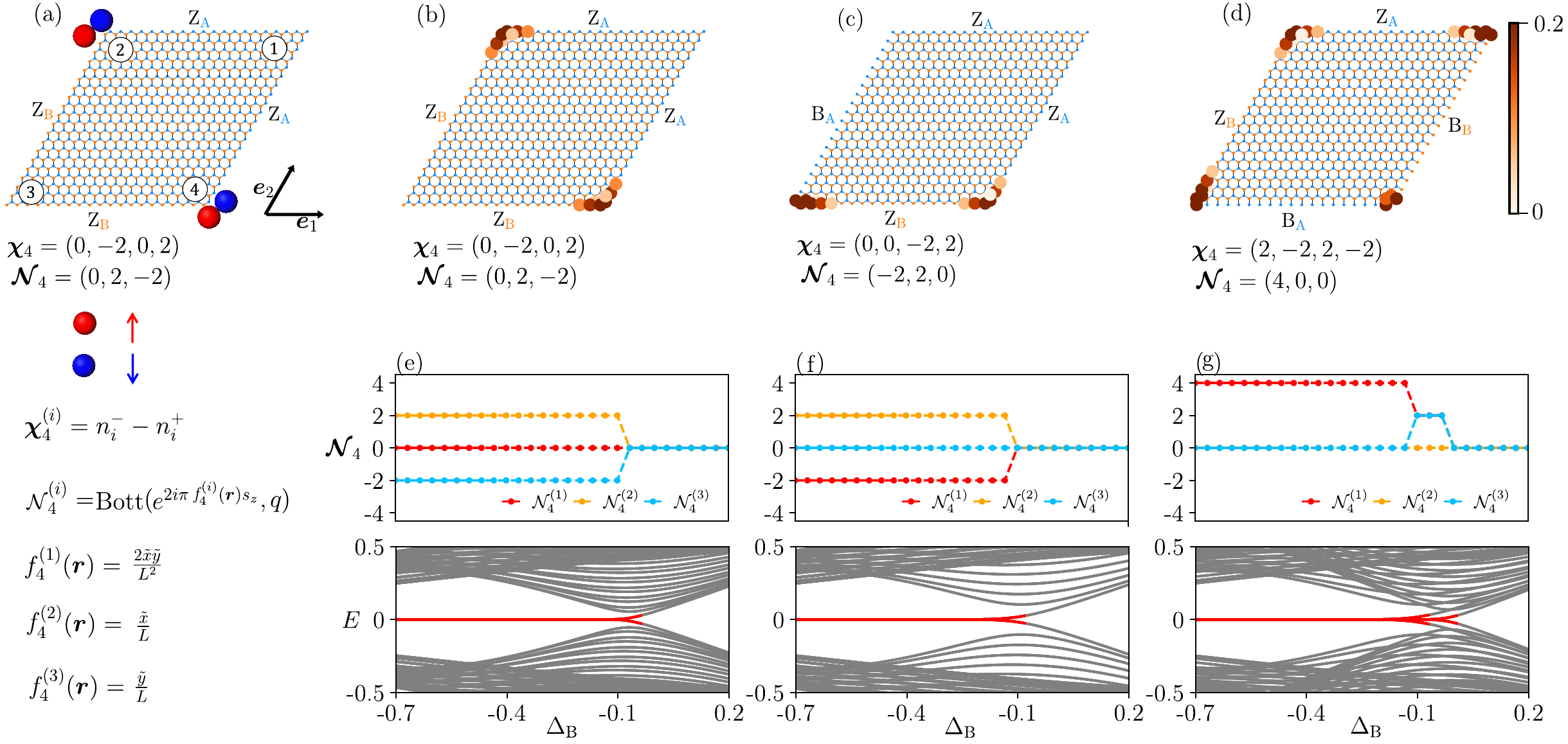}
\caption{(a) The schematic illustration of  MKP and spin Bott index. $\text{Z}_{\alpha}$ ($\text{B}_{\alpha}$) labels the zigzag (bearded) edge formed by the atom $\alpha \in \{A, B\}$. The circled numbers at the corners label the order of corners. $n_i^{+}$ ($n_i^{-}$) denotes the number of corner staes at the $i$th corner with eigenvalue $+$ ($-$) of $\mathcal{C}$. (b)-(d) The different distribution patterns of MKP in diamond-shaped systems with different edge terminations. The colorbar denotes the local density of states of MKP. (e)-(g) The evolution of  ${\mathcal{N}}_4^{(1,2,3)}$ as functions of $\Delta_{\text{B}}$ to characterize the patterns of MKP in systems with geometries shown in (b)-(d). The lower panels of (e)-(g) plot the energy spectrum, where the red bands mark MKP. The common parameters in (b)-(g) are taken as $t=1,\lambda_{\text{so}}=0.2,\Delta_{\text{A}}=0.5,\mu=0,\lambda_{v}=\lambda_{\text{R}}=0$. In (b)-(d), $\Delta_{\text{B}}=-\Delta_{\text{A}}$. In the calculation of (e)-(f), the side length of the system is $L=40$ in unit of lattice constant.}
\label{N4}
\end{figure*}

In this work, we develop a theoretical framework to characterize time-reversal invariant HOTSCs by higher-order topological invariants based on a series of spin Bott indices.
Our approach is motivated by recent theoretical progresses of characterization \textit{chiral} symmetric systems by using Bott indices \cite{Benalcazar2022,Jiazheng2024}. 
Time-reversal invariant superconductors are in the DIII symmetry class and naturally host chiral symmetry $C=-iTP$, where $T$ and $P$ are, respectively, time-reversal and particle-hole symmetry. However, a direct application of the Bott index $N$ proposed in Ref.~\onlinecite{Jiazheng2024} based on the chiral symmetry $C$ of superconductors can not capture the higher-order topology since $N$ vanishes as required by the  $T$ symmetry (see Eqs.~\eqref{eq2} and \eqref{eql}). 
Therefore, we develop spin Bott indices to characterize HOTSCs, in analogy to the spin-resolved topological characterization of quantum spin Hall insulators \cite{18huang}.
We emphasize that the Bott indices here are obtained under open boundary conditions, which enables capturing the topology of both bulk and boundary states. 
Particularly, the defined spin Bott indices can characterize MKP in systems with arbitrary shapes and all distinct spatial distribution patterns of MKP. 
To demonstrate the effectiveness of our theory, we study the Kane-Mele model \cite{05kane} with sublattice-dependent superconducting pairing potentials. In this model, MKP can be flexibly engineered by boundary cleavage, as shown in Figs.~\ref{N4} and \ref{N6},  and all different patterns of MKP can be characterized by the proposed invariants. In addition, we also investigate disorder effects on the HOTSCs informed by the invariants. Furthermore, by using Bott indices without resolving the spin degree of freedom, we  demonstrate the characterization of HOTSCs in the BDI symmetry class.

This paper is organized as follows. In Sec.~\ref{II}, we develop the spin Bott index for the systems of DIII symmetry class.  In Sec.~\ref{III}, we develop the framework for characterizing MKP by spin Bott indices for systems with or without spin conservation. In Sec.~\ref{IV}, we use the Kane-Mele model with the sublattice-dependent superconducting pairing potentials to examine our theory.  In Sec.~\ref{V}, we further demonstrate the topological characterization of HOTSCs in the BDI symmetry class by Bott indices.
In Sec.~\ref{VI}, we present a brief discussion and summary. Appendices \ref{appendixa}-\ref{appendixg} complement the main text with additional technical details.


\begin{figure*}
\centering
\includegraphics[width=0.98\textwidth]{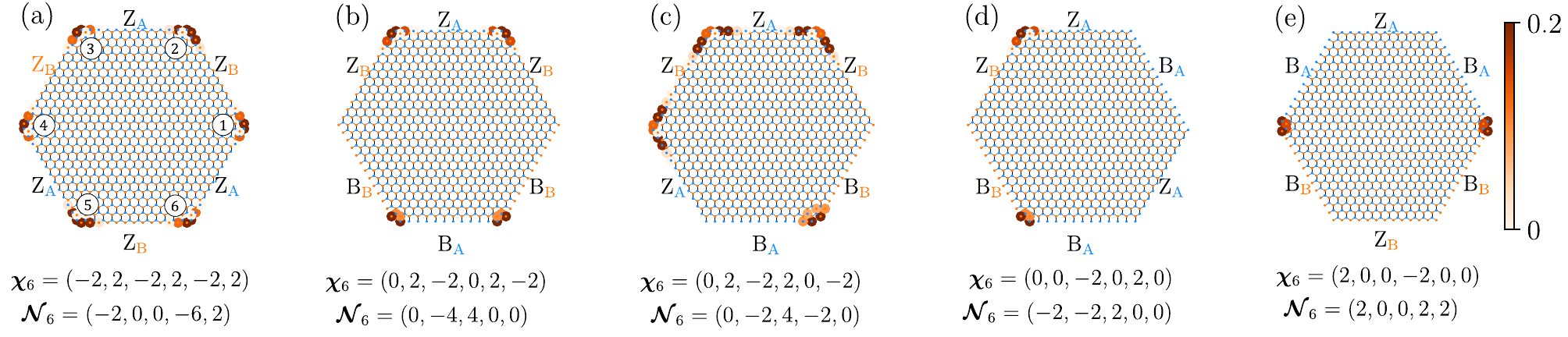}
\caption{(a)-(e) The distribution patterns of MKP in hexagon-shaped systems with a few typical edge terminations.  The vectors $\bm{\chi}_6$ and $\bm{\mathcal{N}}_6$ describe and characterize the different patterns of MKP, respectively.  The  parameters  are taken as $t=1,\lambda_{\text{so}}=0.2,\Delta_{\text{A}}=-\Delta_{\text{B}}=0.5,\mu=0,\lambda_{v}=\lambda_{\text{R}}=0$.}
\label{N6}
\end{figure*}

\section{Spin Bott indices}
\label{II}
The symmetry operators of superconductors in the DIII symmetry class can be represented as $T=is_y\mathcal{K}$, $P=\tau_ys_y\mathcal{K}$, and $C=\tau_y$ with Pauli matrices $s_y$ and $\tau_y$  acting, respectively, on the spin and particle-hole spaces, and $\mathcal{K}$ being the complex conjugation operator. We first show that the application of the previously defined Bott indices \cite{Benalcazar2022,Jiazheng2024} to superconductors in the DIII class based on $C$ can not capture the higher-order topology.

In the eigenbasis of $C$ where  $C$ is represented by $\tau_z$,
the Bogoliubov–de Gennes (BdG) Hamiltonian can be expressed in an off-diagonal form, 
\begin{eqnarray}
H=\begin{pmatrix} 0 & h\\
h^{\dagger}& 0 
\end{pmatrix}.
\label{eqh}
\end{eqnarray}
Using singular value decomposition $h=U_A\Sigma U_B^{\dagger}$, the Bott index can be defined in the following equivalent ways (see Appendix \ref{appendixa} for details) 
\begin{eqnarray}
N&=&\text{Bott}(m,q)\equiv\frac{1}{2\pi i}\text{Tr}\text{log}(mqm^{\dagger}q^{\dagger})\nonumber\\
\quad\quad\quad&=&\frac{1}{4\pi i}\text{Tr}[C\text{log}({M}Q{M}^{\dagger}Q)]\nonumber\\
&=&\frac{1}{4\pi i}\text{Trlog}(\Xi),
\label{eq2}
\end{eqnarray}
where $m=e^{2\pi if(\bm r)}$ is a unitary matrix with $f(\bm r)$ being a polynomial  of the position operator $\bm r$, $q=U_AU_B^{\dagger}$, $Q$ is obtained by replacing $h$ in Eq.~\eqref{eqh} with $q$,  ${M}=\tau_0\otimes m$, and $\Xi=\frac{\mathbbm{1}+C}{2}{M}Q{M}^{\dagger}Q+\frac{\mathbbm{1}-C}{2}{M}^{\dagger}Q{M}Q$.  
Because $TCT^{-1}=-C$, $TMT^{-1}=M^{\dagger}$, and $TQT^{-1}=Q$, we have
\begin{eqnarray}
T\Xi T^{-1}=\Xi,
\label{eql}
\end{eqnarray}
which implies that the eigenvalues of matrix $\Xi$ come into the time-reversal pairs $(e^{i\lambda_i},e^{-i\lambda_i})$. Therefore, $N$ is exactly zero as constrained by the $T$ symmetry.

This restriction is consistent with the MKP related by $T$ having opposite eigenvalues of $C$. We denote the  MKP at a given corner as $\gamma_{1}$ and $\gamma_{2}$, with $\gamma_{2}=T\gamma_{1}$. The Majorana corner states can always be chosen as the eigenstates of $C$ through a superposition of $\gamma_{1}$ and $\gamma_{2}$. Without loss of generality, we choose $C\gamma_{1}=\gamma_{1}$, which leads to 
\begin{eqnarray}
C\gamma_{2}=CT\gamma_{1}=-T\gamma_{1}=-\gamma_{2}.
\label{eq4}
\end{eqnarray}
Therefore, a successful characterization should resolve the spin degree of freedom, which motivates us to construct spin Bott indices.

We first focus on the case with spin U(1) symmetry, for example, $[s_z, H]=0$. Then, there is another chiral symmetry $\mathcal{C}=s_zC$, which satisfies $\{\mathcal{C},H\}=0$ and $T\mathcal{C}T^{-1}=\mathcal{C}$. Following Eq.~\eqref{eq4}, the MKP at a given corner have the same eigenvalue of $\mathcal{C}$. We define the spin Bott index $\mathcal{N}$ by utilizing the $\mathcal{C}$ operator,
\begin{eqnarray}
\label{eqe1}
\mathcal{N}&=&\frac{1}{4\pi i}\text{Tr}[\mathcal{C}\text{log}({M}Q{M}^{\dagger}Q)].
\end{eqnarray}
The spin Bott index can also be defined by the  $C$ operator but with the matrix $M$ replaced by $M_z=e^{2\pi i f(r)s_z\tau_0}$,
\begin{eqnarray}
\bar{\mathcal{N}}&=&\frac{1}{4\pi i}\text{Tr}[C\text{log}({M}_zQ{M}_z^{\dagger}Q)].
\label{eqn}
\end{eqnarray}
Here $M_z$
commutes with $T$. With $s_z$ conservation,  $\mathcal{N}=\bar{\mathcal{N}}$ (see Appendix \ref{appendixb} for details).

\section{Characterization of  MKP}
\label{III}
When $[s_z, H]=0$, $H$ is block-diagonal in the spin space and each block belongs to the AIII symmetry class \cite{Chiu2016}, allowing one corner to have $Z$ pairs of MKP. To describe the pattern of MKP at corners of a system with open-boundary conditions,
we define the vector
\begin{eqnarray}
\boldsymbol{\chi}_{p}=\left(n_1^{-}-n_1^{+},\dots,n_p^{-}-n_p^{+}\right) \in 2\bm{Z},
\label{chiv}
\end{eqnarray}
where $n_i^{\pm}$ denote the number of Majorana corner states  localized at the $i$th corner with eigenvalue $\pm 1$ of $\mathcal{C}$ and $p$ is the number of corners.

Building upon our joint work \cite{Jiazheng2024}, for arbitrarily shaped systems,
we characterize $\bm{\chi}_p$ using ($p-1$) distinct spin Bott indices,
\begin{eqnarray}
\boldsymbol{\chi}_{p}=\mathcal{M}^{-1}\cdot\left({\mathcal{N}}_p^{(1)},\dots,{\mathcal{N}}_p^{(p-1)},0\right)^{\mathrm{T}},
\label{cc1}
\end{eqnarray}
where ${\mathcal{N}}_p^{(i)}\in 2Z$ is generated by  a polynomial ${f}_p^{(i)}$ of position operator for $1\le i \le p-1$. The matrix $\mathcal{M}$ is defined by $\mathcal{M}_{ij}=\operatorname{sign}\left({f}_p^{(i)}(\boldsymbol{x}_{j})\right)/2$ and $\mathcal{M}_{pj}=1/2$, with $\bm{x}_j$ being the position of $j$th corner.  The polynomials are constructed such that $\text{det}(\mathcal{M})\neq 0$ and ${f}_p^{(i)} (\bm{x}_j)=\pm 1/2$ (coordinate origin chosen at the center of the system) \cite{Jiazheng2024}. We emphasize that here  ${\mathcal{N}}_p^{(i)}$ is obtained under the open boundary conditions. In this case, although $q$ is not unique when zero-energy states are present, ${\mathcal{N}}_p^{(i)}$ is still well-defined \cite{Jiazheng2024}.

For example, 
${f}_{p}^{(i)}$ for the diamond- ($p=4$ in Fig.~\ref{N4}) and hexagon-shaped ($p=6$ in Fig.~\ref{N6}) systems can be chosen, respectively, as
\begin{eqnarray}
&&{f}_{4}^{(1)}=2\tilde{x}\tilde{y}/L^2,\quad {f}_{4}^{(2)}=\tilde{x}/L,\quad {f}_{4}^{(3)}=\tilde{y}/L,\nonumber\\
&&{f}_{6}^{(1)}=(x^3-xy^2/3+8\sqrt{3}y^3/9)/2L^3,\nonumber\\
&&{f}_{6}^{(2)}=(x^2-4xy/\sqrt{3}-y^2/3)/2L^2,\nonumber\\
&&{f}_{6}^{(3)}=(x^2+4xy/\sqrt{3}-y^2/3)/2L^2,\nonumber\\
&&{f}_{6}^{(4)}=(x^3-3xy^2)/2L^3,\nonumber\\
&&{f}_{6}^{(5)}=(x^3+7xy^2/3)/2L^3,
\label{fe}
\end{eqnarray}
where $L$ is the length of the side and $(x,y)$ is the coordinate of lattice sites. For $p=4$, $(\tilde{x},\tilde{y})$ is defined through $(x,y)=\tilde{x}\bm{e}_1+\tilde{y}\bm{e}_2$ with  $\bm{e}_{1,2}$ being unit vectors along sides of the diamond [Fig.~\ref{N4}(a)]. It is noted that the choice of polynomials $f_p^{(i)}(\bm r)$ is not unique for certain shaped system. In Appendix \ref{appendixp}, we present an alternative choice of $f_6^{(i)}(\bm r)$ with $i=1,2,3,4,5$.

We now turn to the case without spin U(1) symmetry, where the operator $\mathcal{C}$ is no longer the chiral symmetry of $H$. Therefore, $\mathcal{N}$ in Eq.~\eqref{eqe1} and $\bm{\chi}_{p}$ in Eq.~\eqref{chiv} are no longer well-defined. However, $\bar{\mathcal{N}}$ is still well-defined if the spin mixing term is not too strong, of which the exact mathematical condition is derived in Appendix \ref{appendixc}. Without the spin U(1) symmetry, MKP has a $Z_2$ topological classification. We redefine the correspondence between the spin Bott indices and the pattern of MKP as
\begin{eqnarray}
\bar{\boldsymbol{\chi}}_{p}= \Big[\mathcal{M}^{-1}\cdot\left({\bar{\mathcal{N}}}_p^{(1)},\dots,{\bar{\mathcal{N}}}_p^{(p-1)},0\right)^{\mathrm{T}}\Big] ~\text{mod}~ 4.
\label{cc2}
\end{eqnarray}
Here ${\bar{\mathcal{N}}}_p^{(i)}$ is still generated by the same $ f_{p}^{(i)}$ through Eq.~\eqref{eqn} and each element of $\bar{\boldsymbol{\chi}}_{p}$ is 0 or 2, which counts the number of topologically robust Majorana corner states at a given corner.

\section{Theoretical model}
\label{IV}
As an illustration, we study the Kane-Mele model with sublattice-dependent superconducting pairing potentials. The model Hamiltonian is
\begin{eqnarray}
\mathcal{H}&=&t \sum_{\langle i j\rangle,s } c_{is}^{\dagger} c_{js}+i\lambda_{\text{so}} \sum_{\langle\langle i j\rangle\rangle,s,s^{\prime}} \nu_{i j} c_{is}^{\dagger}(s_z)_{ss^{\prime}}c_{js^{\prime}}\nonumber\\
&+&\sum_{i,s}(\lambda_{{v}}\xi_i-\mu) c_{is}^{\dagger}c_{is} \nonumber+i\lambda_{\text{R}}\sum_{\langle i j\rangle,s,s^{\prime}}c_{is}^{\dagger}\left((\bm s \times \hat{\bm d}_{i j})_z\right)_{ss^{\prime}} c_{js^{\prime}}\nonumber\\
&+&
\sum_{i,s,s^{\prime}} (\Delta_i c_{is}^{\dagger}(-is_y)_{ss^{\prime}} c_{is^{\prime}}^{\dagger}+h.c.) ,
\label{Ha}
\end{eqnarray}
where $c_{is}^{\dagger}$ ($c_{is}$) is the electron creation  (annihilation) operator with spin index $s$, and $\bm{s}=(s_x,s_y,s_z)$ are spin Pauli matrices. The first (second) term in $\mathcal{H}$ describes the hopping between the nearest- (next-nearest-) neighbors on honeycomb lattice and $\nu_{i j}=(2 / \sqrt{3})(\hat{\bm d}_1 \times \hat{\bm d}_2)_z= \pm 1$, where $\hat{\bm d}_{1,2}$ are unit vectors along the two bonds which the electron traverses from site $j$ to $i$. The third term describes the staggered potential with $\xi_{i}=1$ ($-1$) for the A (B) sublattice and $\mu$ is the chemical potential.
The fourth term is the nearest-neighbor Rashba term. The last term describes the sublattice-dependent superconducting pairing potentials with $\Delta_{i}=\Delta_{\text{A}}$ ($\Delta_{\text{B}}$) for the A (B) sublattice. 
Here,  $t,\lambda_{\text{so}},\lambda_{{v}},\lambda_{\text{R}},\Delta_{\text{A}}$,  and $\Delta_{\text{B}}$ are model parameters and we take $\lambda_{v}=\lambda_{\text{R}}=0$ unless otherwise stated. With $\lambda_{\text{R}}=0$, there is a spin U(1) symmetry of the BdG Hamiltonian $H$, $[H,s_z]=0$, in the Nambu basis defined by $\bm{\Psi}=\{\bm{\psi},is_y\bm{\psi}^{\dagger}\}$ with $\bm{\psi}$ being the basis of the normal states.
We note that this model 
can be realized in quantum spin Hall insulators on a buckled honeycomb lattice  \cite{Liucc2011,Liu2011a,Xu2013,Si2014,Luo2024}. In these systems, the sublattice-dependent superconducting pairing potentials could be obtained by covering superconductors on top and bottom surfaces with distinct pairing potentials.

The normal state of $\mathcal{H}$ realizes quantum spin Hall insulator when the chemical potential $\mu$ is in the bulk gap and hosts gapless helical states along edges. 
For simplicity, we take $\mu=0$ to analyze the edge states  (see Appendix \ref{appendixd} for details).  
The helical edge states along
the zigzag and bearded edges (type-I), formed by either the A or B atoms, are gapped by the superconducting pairing with magnitude $\Delta_{\text{A}}$ or $\Delta_{\text{B}}$ (see Appendix \ref{appendixd} for details). In comparison, the helical edge states along the armchair edge (type-II), formed by both the A and B atoms, are gapped by the superconducting potentials with magnitude $(\Delta_{\text{A}}+\Delta_{\text{B}})/2$. MKP emerge at a given corner when the pairing gaps of two adjacent edges have a sign change.  This scenario can occur in two special cases, (i) both adjacent edges belong to type-I edge and are formed by A and B atoms, respectively, with $\Delta_{\text{A}}\Delta_{\text{B}}<0$; (ii) two adjacent edges belong to type-I and type-II edges, respectively, satisfying $(\Delta_{\text{A}}+\Delta_{\text{B}})\Delta_{\alpha}<0$ (type-I edge is formed by the $\alpha$ atom). 
For case (i) [(ii)], we find that MKP can be engineered in systems with diamond and hexagon shapes (for square and dodecagon shapes). In the following,   we focus on the case (i),  and case (ii) is discussed in Appendix \ref{appendixe}.

\begin{figure}
\centering
\includegraphics[width=0.5\textwidth]{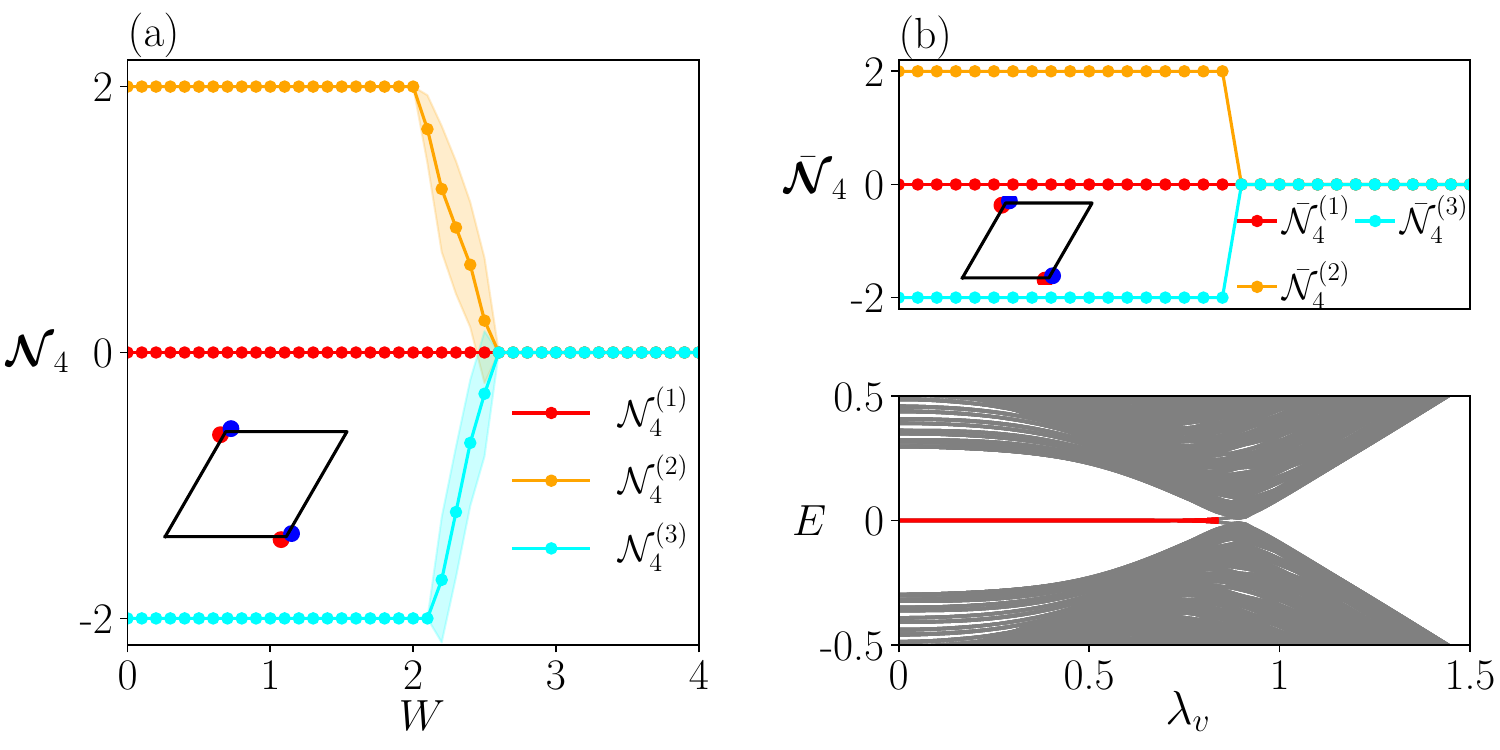}
\caption{
(a) The Bott indices ${\mathcal{N}}_4^{(1,2,3)}$ as functions of disorder strength $W$ averaged over 200 independent disorder realizations. The solid line and shaded region indicate the average
of the plotted quantity and the standard deviation, respectively. 
(b) The invariants $\bar{{\mathcal{N}}}_4^{(1,2,3)}$ and the energies $E$ as functions of $\lambda_v$ by taking $\lambda_{\text{R}}=0.25$. In (a) and (b), the system geometry is shown in Fig.~\ref{N4}(b), and the side length $L=30$. Other unspecified model parameters are the same as used in Fig.~\ref{N4}(b). The insets schematically plot the pattern of MKP.}
\label{dis}
\end{figure}

In the diamond- [Fig.~\ref{N4}] and hexagon-shaped [Fig.~\ref{N6}] systems, all edges belong to type-I and are formed by A or B atom, depending on the specific boundary termination. Therefore,  all possible distribution patterns of MKP can be realized by varying edge cleavage to design sign patterns of edge-state energy gap. Specifically, the number of corners that host MKP can be any even number not exceeding 4 (6) for the diamond (hexagon) system. In Figs.~\ref{N4}(b)-\ref{N4}(d) and Figs.~\ref{N6}(a)-\ref{N6}(e), we present several typical spatial distribution patterns of MKP,  demonstrating the flexibility to engineer MKP.
We employ the vectors $\bm{\chi}_{4}$ and $\bm{\chi}_{6}$ to describe these configurations of MKP, which can be fully characterized 
by the vectors $\bm{\mathcal{N}}_4=({\mathcal{N}}_{4}^{(1)},{\mathcal{N}}_{4}^{(2)},{\mathcal{N}}_{4}^{(3)})$ and $\bm{\mathcal{N}}_6=({\mathcal{N}}_6^{(1)}, {\mathcal{N}}_6^{(2)}, {\mathcal{N}}_6^{(3)}, {\mathcal{N}}_6^{(4)},{\mathcal{N}}_6^{(5)})$, respectively.  Our numerical results are consistent with Eq.~\eqref{cc1}. In Figs.~\ref{N4}(e)-\ref{N4}(g), we present the evolution of $\bm{\mathcal{N}}_{4}$ and energies as functions of $\Delta_{\text{B}}$ for the systems  depicted in Figs.~\ref{N4}(b)-\ref{N4}(d), which consistently characterizes the topological phase transitions. We emphasize that this correspondence captured by $\bm{\mathcal{N}}_{4,6}$  reflects the nontrivial topology of both bulk and edge states \cite{Jiazheng2024}. In contrast, the Bott index defined under the periodic boundary conditions \cite{Benalcazar2022} can not precisely characterize MKP as it does not capture termination-dependent edge topology.

The real space topological invariants are particularly useful to study the disorder effect on topological states \cite{Li2020a,Benalcazar2022}.  MKP are protected by the $P$ and $T$ symmetries and are robust against weak disorders. To show this, we add the disorder term $\sum_i\mu_ic_i^{\dagger}c_i$ to Eq.~\eqref{Ha}, with $\mu_i$ obeying uniform random distribution in $[-W, W]$.  In Fig.~\ref{dis} (a),  we study the disorder effect on the system shown in Fig.~\ref{N4}(b).  We find that $\bm{\mathcal{N}}_4$ remains quantized at (0, 2, -2) for weak disorders, indicating the persistence of the MKP pattern, until a transition
drives the system into a trivial phase with $\bm{\mathcal{N}}_4=(0,0,0)$ when
the disorder becomes sufficiently strong.

We further consider the $s_z$ symmetry breaking case with $\lambda_{\text{R}}\neq 0$. In this case,  we 
characterize MKP by $\bar{\mathcal{N}}$ defined in Eq.~\eqref{eqn}. For instance, we consider nonzero $\lambda_{\text{R}}$ in the system associated with Fig.~\ref{N4}(b). In Fig.~\ref{dis}(b),  we present the numerical results of $\bar{{\mathcal{N}}}_4^{(1,2,3)}$ and energies as functions of $\lambda_{v}$ at a fixed $\lambda_{\text{R}}$. The topological invariants correctly characterize the topological phase transition driven by the increasing of $\lambda_{v}$.

\begin{figure}
\centering
\includegraphics[width=0.5\textwidth]{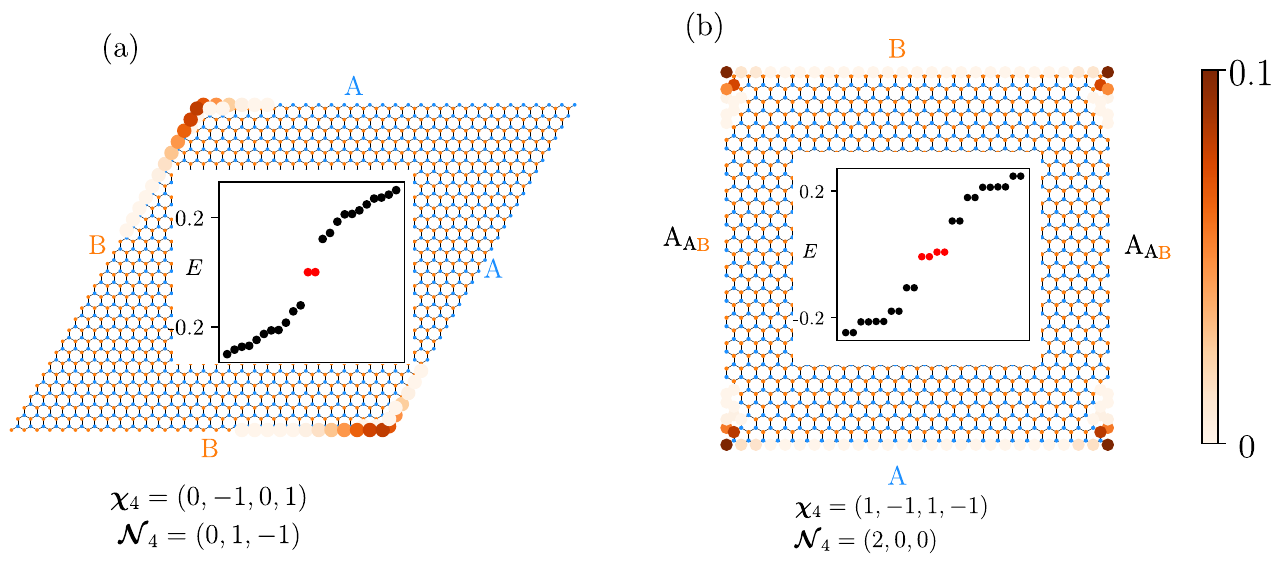}
\caption{Majorana corner states in different-shaped systems that belong to the BDI symmetry class. The letters $\text{A}_{\text{AB}}$,  $\text{Z}_{\alpha}$,  $\text{B}_{\alpha}$ label the armchair, zigzag, and bearded edges, respectively, with $\alpha\in$ \{A, B\}.  The insets plot the energies close to zero. In (a), we take $\Delta=0.25,\eta_{\text{B}}=0.4,\eta_{\text{A}}=0$. In (b), we take $\Delta=0.25,\eta_{\text{B}}=\eta_{\text{A}}=0.4$.}
\label{BDI}
\end{figure}

\section{HOTSC in the BDI symmetry class}
\label{V}
We now turn to another important superconducting system in the BDI symmetry class, which hosts the chiral symmetry as a combination of effective time-reversal symmetry $\tilde{T}$ and particle-hole symmetry $P$ with $\tilde{T}^2=1$ and $P^2=1$. 
For the BDI symmety class, the Majorana corner states are protected by chiral symmetry and have a $Z$ topological classification. Therefore, the Majorana corner states can be directly characterized by the Bott index $N$ defined by Eq.~\eqref{eq2} without the need of resolving the spin degree of freedom. The different Majorana  patterns in the real space space described by $\bm{\chi}_{p}$ ($\bm{\chi}_{p}\in \bm{Z}$) can be  characterized by $\bm{N}_p$ ($\bm{N}_{p}\in \bm{Z}$) as
\begin{eqnarray}
\boldsymbol{\chi}_{p}=\mathcal{M}^{-1}\cdot\left({\bm{N}}_p,0\right)^{\mathrm{T}}.
\label{cc12}
\end{eqnarray}
Here $\bm{N}_{p}=({{N}}_p^{(1)},\dots,{{N}}_p^{(p-1)})$
and ${N}_p^{(i)}$ is generated by  the polynomial ${f}_p^{(i)}$ of position operator for $1\le i \le p-1$.

We note that the BDI symmetry class can be realized in two special cases by applying a magnetic Zeeman field or considering magnetic orders on top of the DIII symmetry class systems which respect spin SU$(2)$ [case (a)] or U$(1)$ [case (b)] symmetry. For the case (a), the magnetic field can be applied along arbitrary $\bm n$ direction  and the $\tilde{T}$ symmetry is defined as $\tilde{T}=Ts_{\bm e}$ with $\{s_{\bm n},s_{\bm e}\}=0$. For the case (b), the magnetic field direction is constrained to be in the plane perpendicular to the spin U$(1)$ $z$-axis and $\tilde{T}$ symmetry is defined as $Ts_z$. The case (a) can occur in the superconducting systems with collinear magnetic structures described by spin point group \cite{Liuqihang2022}. Even for the coplanar magnetic structures ($xy$ plane assumed) systems, the BDI symmetry class can still be realized with the definition of $\tilde{T}=Ts_z$.
The case (b) can be exemplified  by our model (Eq.~\eqref{Ha}) by adding an in-plane magnetic Zeeman field and setting $\lambda_{\text{R}}=0$. In the following, we focus on the HOTSCs in the BDI symmetry class for this model.

 To engineer Majorana corner states, we consider site-dependent Zeeman field which can be obtained in the Kane-Mele-Hubbard model \cite{Luo2024}. The total model Hamiltonian in momentum space can be written as
\begin{equation}
\begin{aligned}
 {H}(\bm k)&=f_x(\bm k)\tau_z\sigma_xs_0+f_y(\bm{k})\tau_z\sigma_ys_0+f_z(\bm{k})\tau_z\sigma_zs_z\\
&+\Delta\tau_xs_0+\eta_{\text{A}}\tau_0(\sigma_0+\sigma_z)/2s_{x}+\eta_{\text{B}}\tau_0(\sigma_0-\sigma_z)/2s_{x},
\label{h}   
\end{aligned}
\end{equation}
 where $f_{x,y,z}$ are momentum-dependent functions (see Appendix \ref{appendixd}) and Pauli matrices $\sigma$ act on the orbitals. We assume an $s$-wave pairing with $\Delta_{\text{A}}=\Delta_{\text{B}}=\Delta$. $\eta_{\text{A}}$ and $\eta_{\text{B}}$ denote the magnitude of Zeeman field at the A and B sites, respectively. $H$ respects the $\tilde{T}$ and $P$ symmetries, where $\tilde{T}=Ts_z=-s_xK$ and $P=\tau_ys_yK$. The chiral symmetry can be represented by combination of $\tilde{T}$ and $P$ symmetries, $\mathcal{C}=i\tilde{T}P=\tau_ys_z$. Thus, $H$ in Eq.~\eqref{h} belongs to the BDI symmetry class. In the following, we show that various distinct spatial patterns of Majorana corner states can be realized  by tuning $\Delta, \eta_{\text{A}}$ and $\eta_{\text{B}}$.

 According to the edge theory analysis (see Appendix \ref{appendixd}),  the gapless edge states at the Dirac points along the armchair and zigzag directions are the approximate eigenstates of operators $\sigma_ys_z$ and $\sigma_z$, respectively. Thus, the edge states along the armchair direction can be gapped by the terms $\Delta\tau_xs_0, \eta_{\text{A}}\sigma_zs_x/2$, and $-\eta_{\text{B}}\sigma_zs_x/2$, which give rise to the edge states energy gap $\Delta-|\eta_{\text{A}}-\eta_{\text{B}}|/2$. The edge states along the zigzag (beard) direction formed by the $\alpha$ atom can be gapped by the terms $\Delta\tau_xs_0$ and $\eta_{{\alpha}}(\sigma_0+\xi\sigma_z)s_x/2$ with $\xi=1,-1$ for $\alpha=\text{A},\text{B}$, which give rise to the edge states energy gap $\eta_{\alpha}-\Delta$. When $\eta_{\text{A}}=\eta_{\text{B}}=\eta$, our analysis reproduces the results obtained in Ref.~\onlinecite{Pan2019}.

The Majorana corner states can be realized when the two
adjacent edges have a Zeeman-dominated and superconductivity-dominated gap, respectively. This scenario can be realized in two special cases, (i) both adjacent edges belong to type-I edge and are formed by A and B atoms, respectively, with $(|\eta_{\text{A}}|-|\Delta|)(|\eta_{\text{B}}|-|\Delta|)<0$; (ii) two adjacent edges belong to type-I and type-II edges, respectively, satisfying  $|\eta_{\alpha}|>|\Delta|>|\eta_{\text{A}}-\eta_{\text{B}}|/2$ (type-I edge is formed by the $\alpha$ atom). Thus, by tuning $\Delta, \eta_{\text{A}}$ and $\eta_{\text{B}}$, various distinct spatial patterns of the Majorana corner states can be realized in the diamond, square, hexagon, and dodecagon systems.  In Fig. ~\ref{BDI}, we present certain Majorana patterns in the diamond and square systems. These different Majorana patterns can be characterized by the corresponding $\bm{N}_4$ vectors, as listed in Fig. ~\ref{BDI}.

\section{Discussion and summary}
\label{VI}
We discuss some details of our theoretical approach. Although our numerical results are given for system with certain geometry, our theory is applicable for arbitrarily shaped system. For a general system with $p$ corners, we can construct $(p-1)$ polynomials $f_p^i(\boldsymbol{r})$ that satisfy the required conditions (see the work \cite{Jiazheng2024} for details).  Based on the exact correspondence between the generated Bott indices and patterns of Majorana corner modes, we can fully characterize higher-order topological superconductors.

We emphasize that crystalline symmetries can generally impose constraints on the Bott indices by restricting the patterns of Majorana corner modes that to be compatible with the symmetry. For example, in a time-reversal invariant system with four-fold rotation symmetry, MKP patterns that violate this symmetry are prohibited. Specifically, MKP must appear simultaneously at all four corners or not at all. Therefore, the constraints on the spin Bott index yield $\mathcal{N}_4^{(2)}=\mathcal{N}_4^{(3)}=0$ and $\mathcal{N}_4^{(1)}=4Z$ with $Z$ being an integer. This principle extends to other crystalline symmetries and constrains the possible values of the Bott indices. A detailed exploration of specific constraints for various crystalline symmetries is an interesting direction for future
work.

Higher-order topological phases have been classified into intrinsic type which hosts crystalline symmetry-protected bulk topology and extrinsic type whose topological phase transition can be tuned by closing boundary energy gap \cite{Geier2018,Trifunovic2019}. Both the intrinsic and extrinsic HOTSCs can be realized by $\mathcal{H}$ in Eq.~\eqref{Ha}. In the case of $\lambda_{\text{R}}=0$ and $\Delta_{\text{A}}=-\Delta_{\text{B}}$, $\mathcal{H}$ describes an odd parity superconductor with an effective inversion symmetry in the bulk, where a $Z_4$ inversion symmetry indicator \cite{Hsu2020} $\kappa$ can be defined as a bulk invariant to characterize the bulk topology (see Appendix \ref{appendixg} for details). The intrinsic HOTSCs are realized in the three examples shown in Figs.~\ref{N4}(b), ~\ref{N6}(a), and ~\ref{N6}(e), where both the bulk and edge terminations preserve the inversion symmetry. Here the nontrivial bulk invariant $\kappa$ indicates the presence of MKP corner states for symmetry-preserving edge terminations but does not distinguish the different spatial patterns of MKP in the above three examples.  In the case where the crystalline symmetry is broken in the bulk and/or by edge termination, extrinsic HOTSCs with time-reversal symmetry can also host robust MKP protected by (1) energy gaps of both bulk and boundaries and (2) symmetries of $P$ and $T$, as clearly demonstrated by results in Figs.~\ref{N4} and \ref{N6}. 
Our theory can characterize both intrinsic and extrinsic time-reversal invariant HOTSCs and distinguish different spatial patterns of MKP, while symmetry indicators such as $\kappa$ cannot.

In superconductors belonging to the D symmetry class, where only the particle-hole symmetry 
$P$ is present, Majorana corner states are protected by this symmetry. 
This scenario can be realized, for example, by further adding a Rashba term to Eq. \eqref{h}, where a small $\lambda_{\text{R}}$ cannot remove the Majorana corner states but can change the symmetry classes. The generalization of our theory to the D symmetry class is an open question.

In summary, we establish a real-space method for characterization of HOTSCs in both DIII and BDI symmetry classes using invariants of Bott indices. We apply our theory to characterize various patterns of Majorana corner states in a representative model with different shapes. Our study introduces Bott indices for characterizing higher-order topological superconductors, which advances the understanding of their  properties and opens up a range of applications.

\section{Acknowledgments}
X.-J. L. and F.W. are supported by the National Key Research and Development Program of China (Grant No.
2022YFA1402400) and the National Natural Science Foundation of China (Grant No. 12274333). 
J.-Z. L. and M.X. are supported by the National Key Research and Development Program of China (Grant No. 2022YFA1404900), the National Natural Science Foundation of China (Grant No. 12274332).

\appendix

\section{equivalent definitions of the Bott index} 
\label{appendixa}

The systems in the DIII symmetry class respect the time-reversal symmetry $T$, particle-hole symmetry $P$, and chiral symmetry $C=-iTP$. Under the eigenbasis of $C$ where $C=\tau_z$, the Bogoliubov–de Gennes Hamiltonian of system can be written as an off-diagonal form
\begin{eqnarray}
H=\begin{pmatrix} 0 & h\\
h^{\dagger}& 0 
\end{pmatrix}.
\label{eq1}
\end{eqnarray}
By the singular value decomposition $h=U_A\Sigma U_B^{\dagger}$, we can define the Bott index \cite{Benalcazar2022,Jiazheng2024}
\begin{eqnarray}
&&N=\frac{1}{2\pi i}\text{Tr}\text{log}(mqm^{\dagger}q^{\dagger}),
\label{seq6}
\end{eqnarray}
where $q=U_AU_B^{\dagger}$ and $m=e^{2\pi i f(\bm r)s_0}$ is a unitary matrix generated by polynomial $f(\bm r)$, with identity matrix $s_0$ acting on the spin space.  

By defining $M=\tau_0\otimes m$ and $Q=\begin{pmatrix}0& q\\
q^{\dagger}&0
\end{pmatrix}$, we have
\begin{eqnarray}
MQM^{\dagger}Q&=&\begin{pmatrix}m& 0\\
0&m\end{pmatrix}\begin{pmatrix}0& q\\
q^{\dagger}&0\end{pmatrix}\begin{pmatrix}m^{\dagger}& 0\\
0&m^{\dagger}\end{pmatrix}\begin{pmatrix}0& q\\
q^{\dagger}&0\end{pmatrix}\nonumber\\
&=&\begin{pmatrix}mqm^{\dagger}q^{\dagger}& 0\\
0&mq^{\dagger}m^{\dagger}q\end{pmatrix}.
\end{eqnarray}
Then we have 
\begin{eqnarray}
N&=&\frac{1}{2\pi i}\text{Tr}\text{log}(mqm^{\dagger}q^{\dagger})\nonumber\\
\label{seq3}
&=&\frac{1}{4\pi i}\text{Tr}\begin{pmatrix}\text{log}(mqm^{\dagger}q^{\dagger})& 0\\
0&-\text{log}(mq^{\dagger}m^{\dagger}q)\end{pmatrix}\\
&=&\frac{1}{4\pi i}\text{Tr}[C\text{log}(MQM^{\dagger}Q)].
\label{seq5}
\end{eqnarray}
In Eq.~\eqref{seq3}, we have used the relation $\text{Trlog}(mqm^{\dagger}q^{\dagger})=-\text{Trlog}(qmq^{\dagger}m^{\dagger})=-\text{Trlog}(mq^{\dagger}m^{\dagger}q)$.
$N$ can be also be defined as
\begin{eqnarray}
N&=&\frac{1}{2\pi i}\text{Tr}\text{log}(mqm^{\dagger}q^{\dagger})\nonumber\\
\label{seqa}
&=&\frac{1}{4\pi i}\text{Tr}\text{log}\begin{pmatrix}mqm^{\dagger}q^{\dagger}& 0\\
0&m^{\dagger}q^{\dagger}mq\end{pmatrix}\\
&=&\frac{1}{4\pi i}\text{Tr}\text{log}(\Xi).
\label{seq4}
\end{eqnarray}
where $\Xi=\frac{\mathbbm{1}+C}{2}{M}Q{M}^{\dagger}Q+\frac{\mathbbm{1}-C}{2}{M}^{\dagger}Q{M}Q$ and we have used the relation $\text{Trlog}(mqm^{\dagger}q^{\dagger})=\text{Trlog}(m^{\dagger}q^{\dagger}mq)$ in Eq.~\eqref{seqa}. Thus, Eqs.~\eqref{seq6}, \eqref{seq5},  and \eqref{seq4} provide three equivalent definitions of the Bott index $N$.

\section{equivalent expressions of the spin Bott indices} 
\label{appendixb}
When a system respects $s_z$ symmetry, namely $[s_z,H]=0$, the system has a new chiral symmetry $\mathcal{C}=s_zC$, which satisfies $\{\mathcal{C},H\}=0$. We define the spin Bott index by the $\mathcal{C}$ operator, 
\begin{eqnarray}
\mathcal{N}&=&\frac{1}{4\pi i}\text{Tr}[\mathcal{C}\text{log}(MQM^{\dagger}Q)]\nonumber\\
&=&\frac{1}{4\pi i}\text{Tr}[s_zC\text{log}(MQM^{\dagger}Q)].
\label{szn}
\end{eqnarray}
 As $[s_z,H]=0$, matrix  $Q$ is block-diagonal in the spin space and can be written as 
\begin{eqnarray}
Q=\begin{pmatrix} Q_{+} & 0\\
0& Q_{-}
\end{pmatrix}.
\label{eq11}
\end{eqnarray}
Thus, $\mathcal{N}$ can be further written as 
\begin{eqnarray}
\mathcal{N}&=&\frac{1}{4\pi i}\text{Tr}[s_zC\text{log}(MQM^{\dagger}Q)\nonumber\\
&=&\frac{1}{4\pi i}\text{Tr}[C_{+}\text{log}(M_{+}Q_{+}M_{+}^{\dagger}Q_{+})-C_{-}\text{log}(M_{-}Q_{-}M_{-}^{\dagger}Q_{-})]\nonumber\\
&=&N_{+}-N_{-},
\label{sp1}
\end{eqnarray}
where matrices $C_{\pm}$ and $M_{\pm}$ are spin resolved and have a dimension that is half of $C$ and $M$. In Eq.~\eqref{sp1}, we have defined
\begin{eqnarray}
\label{eqs:polarized_Bott}
N_{\pm}=\frac{1}{4\pi i}\text{Tr}[C_{\pm}\text{log}(M_{\pm}Q_{\pm}M_{\pm}^{\dagger}Q_{\pm})].
\label{spinbt}
\end{eqnarray}

By utilizing the operator $M_z=e^{2\pi if(\bm r)s_z\tau_0}$, we define another spin Bott index 
\begin{eqnarray}
\bar{\mathcal{N}}=\frac{1}{4\pi i}\text{Tr}[C\text{log}(M_zQM_z^{\dagger}Q)],
\label{sp3}
\end{eqnarray}
With $s_z$ conservation,  matrices $C,Q,$ and $M_z$ are block-diagonal in spin space and we have 
\begin{widetext}
\begin{eqnarray}
\bar{\mathcal{N}}&=&\frac{1}{4\pi i}\text{Tr}[C\text{log}(M_zQM_z^{\dagger}Q)\nonumber\\
&=&\frac{1}{4\pi i}\begin{pmatrix} \text{Tr}[C_{+}\text{log}(M_{+}Q_{+}M_{+}^{\dagger}Q_{+})]&0\\
0&\text{Tr}[C_{-}\text{log}(M_{-}^{\dagger}Q_{-}M_{-}Q_{-})]\end{pmatrix}\nonumber\\
&=&{N}_{+}-{N}_{-}\nonumber\\
&=&\mathcal{N},
\end{eqnarray}
\end{widetext}
where $M_z=\begin{pmatrix} M_{+} & 0\\
0& M_{-}
\end{pmatrix}
\label{eq}
$. In the third equality,  we have used the relation 
\begin{eqnarray}
N_{-}&=&\frac{1}{4\pi i}\text{Tr}[C_{-}\text{log}(M_{-}Q_{-}M_{-}^{\dagger}Q_{-})]\nonumber\\
&=&\frac{1}{2\pi i}\text{Tr}\text{log}(m_{-}q_{-}m_{-}^{\dagger}q_{-}^{\dagger})\nonumber\\
&=&-\frac{1}{2\pi i}\text{Tr}\text{log}(m_{-}^{\dagger}q_{-}m_{-}q_{-}^{\dagger})\nonumber\\
&=&-\frac{1}{4\pi i}\text{Tr}[C_{-}\text{log}(M_{-}^{\dagger}Q_{-}M_{-}Q_{-})].
\end{eqnarray}
Here, $m_{-}$ and $q_{-}$ are spin resolved and have a dimension that is half of $m$ and $q$.

Besides Eq.~\eqref{szn} and Eq.~\eqref{sp3},
there is a third definition of spin Bott index by using the spin projection. 
We first construct the spin projection operator \cite{Prodan2009}
\begin{eqnarray}
P_z=Ps_zP, \quad P=(\mathbbm{1}-Q)/2,
\end{eqnarray}
 Then we can further define the new projection operators and flatten Hamiltonians
\begin{eqnarray}
P_{\pm}=\sum_{i}|\phi_{i}^{\pm}\rangle\langle \phi_{i}^{\pm}|, \quad  \tilde{Q}_{\pm}=-P_{\pm}+CP_{\pm}C,
\label{eqe2}
\end{eqnarray}
where $|\phi_{i}^{+}\rangle$ ($|\phi_{i}^{-}\rangle$) is the $i$th eigenstate of $P_z$ with positive (negative) eigenvalue. We define the spin Bott index 
\begin{eqnarray}
\label{eqs:s_spinbott}
&\tilde{\mathcal{N}}_{\pm}&=\frac{1}{4\pi i}\operatorname{Tr}[C\operatorname{log}(M\tilde{Q}_{\pm}M^{\dagger}\tilde{Q}_{\pm})]=\operatorname{Bott}(m,\tilde{q}_{\pm}),\nonumber\\
&\tilde{\mathcal{N}}&=\tilde{\mathcal{N}}_{+}-\tilde{\mathcal{N}}_{-},
\label{pn}
\end{eqnarray}
 where $\tilde{q}_{\pm}$ is obtained by expressing $\tilde{Q}_{\pm}$ with $$\tilde{Q}_{\pm}=\begin{pmatrix}
     0&\tilde{q}_{\pm} \\
     \tilde{q}_{\pm}^{\dagger} &0
 \end{pmatrix}.$$
 When $[s_z,H]=0$, the eigenvalues of $P_z$ consist of just two types of
nonzero values, $\pm 1$. When projecting onto the spin up and down spaces, we have $\tilde{Q}_{\pm}=Q_{\pm}$, giving rise to $\tilde{\mathcal{N}}_{\pm}={\mathcal{N}}_{\pm}$ and $\tilde{\mathcal{N}}={\mathcal{N}}=\bar{\mathcal{N}}$.  Thus, Eqs.~\eqref{szn}, \eqref{sp3}, and \eqref{pn} provide three equivalent expressions of the spin Bott index when  $[s_z,H]=0$.

\section{Alternative choice of polynomials $f_6^{(i)}$}
\label{appendixp}
The choice of polynomials $f_p^{(i)}$ is not unique. For example, when $p=6$, besides the given expression in Eq.~\eqref{fe}, $f_6^{(i)}$ can be alternatively chosen as
\begin{eqnarray}
&&f_6^{(1)}(r)=\frac{\left(x^3-\frac{8 x^2 y}{\sqrt{3}}-3 x y^2+\frac{8 y^3}{3 \sqrt{3}}\right)}{2 L^3}, \nonumber\\
&&f_6^{(2)}(r)=\frac{\left(x^3-\frac{8 x^2 y}{\sqrt{3}}-\frac{x y^2}{3}\right)}{2 L^3}, \nonumber\\
&&f_6^{(3)}(r)=\frac{\left(x^3-\frac{8 x^2 y}{\sqrt{3}}-\frac{x y^2}{3}+\frac{16 y^3}{3 \sqrt{3}}\right)}{2 L^3}, \nonumber\\
&&f_6^{(4)}(r)=\frac{\left(x^2-\frac{5 y^2}{3}\right)}{2 L^2},\nonumber\\
&&f_6^{(5)}(r)=\frac{\left(x^2-\frac{4 x y}{\sqrt{3}}-\frac{y^2}{3}\right)}{2 L^2},
\end{eqnarray}
The above polynomials give rise to $\text{det}(M)\neq 0$.

\section{Robustness of the spin Bott index}
\label{appendixc}
\label{supIII}
In this section, we show that the Bott indices $\bar{\mathcal{N}}$ and $\tilde{\mathcal{N}}$, obtained through Eq.~\eqref{sp3} (Eq.~(6) in the main text) and Eq.~\eqref{pn}, respectively, are robust, and $\bar{\mathcal{N}}=\tilde{\mathcal{N}}$,  unless the energy gap of the system closes or the gap of $P_z$ closes.

\subsection{Notations}
$\sigma(\cdot)$ denotes the set of eigenvalues of a matrix.

$\sigma_{max}(\cdot)$ denotes the largest eigenvalue of a matrix.

$\underset{S}{\text{sup}} P$ represents the supremum of the values taken by $P$ over a set $S$. 

$\parallel\cdot\parallel$ denotes the spectral norm of a matrix (the largest singular value of a matrix). This norm is induced by the Euclidean norm, $|\cdot|$, for vectors and is given by $\parallel A\parallel=\underset{x\neq0}{\text{sup}}\frac{|Ax|}{|x|}$, where $x$ is a vector.

For the spectral norm, we have the following two inequalities for two square matrices $A$ and $B$,
\begin{eqnarray}
    \parallel A + B \parallel &\le \parallel A \parallel + \parallel B \parallel \label{eqs:norminequal_1}\\
    \parallel A  B \parallel &\le \parallel A \parallel  \parallel B \parallel \label{eqs:norminequal_2}.
\end{eqnarray}

$\operatorname{dist}(n,m)$ represents the Euclidean distance function between two vectors denoted by $n$ and $m$ in the position space.

$\mathcal{O}$ denotes the order of approximation.

\subsection{Mathematical derivation}

First, we consider the spin Bott index in Eq.~\eqref{sp3}. We have
\begin{equation}
    \bar{\mathcal{N}}=\frac{1}{4\pi i}\text{Tr}[C\text{log}(M_zQM_z^{\dagger}Q)]=\operatorname{Bott}(m_z,q),
\end{equation}
where $m_z=e^{2\pi i f(\bm r) s_z}$. It follows that
\begin{widetext}
\begin{equation}
\begin{aligned}
   m_z &= e^{2\pi i f(\bm r) s_z }\\
    &=e^{2\pi i f(\bm r) (1-2c_z) } \\
    &=e^{2\pi i f ( \bm r)} \left(1+\left(-4\pi i f(\bm r)c_z\right)+\left(-4\pi i f(\bm r)c_z\right)^2/2!+\cdots\right) \\
    &=e^{2\pi i f ( \bm r)}\left(1-c_z+c_z\left(1+\left(-4\pi i f(\bm r)\right)+\left(-4\pi i f(\bm r)\right)^2/2!+\cdots\right)\right) \\
    &=e^{2\pi i f(\bm r) }(c_ze^{-4\pi i f(\bm r)  }+1-c_z)\\
    &=c_ze^{-2\pi i f(\bm r) }+(1-c_z)e^{2\pi i f(\bm r) },
\end{aligned}
\end{equation}
\end{widetext}
where $c_z$ is the Fermi projector of $s_z$, which implies that $s_z=1-2c_z$ and $c_z^2=c_z$.  In the third step of the above derivation, we utilize the equality $c_z^n=c_z$ for $n \ge 1$.

We introduce the following theorem.
\begin{theorem}
    \label{thm: homoinvar}
    Given two continuous maps $V(s)$ : $ [0,1]\to \mathcal{U}(N)$ and $W(s)$ : $ [0,1]\to \mathcal{U}(N)$, where $\mathcal{U}(N)$ represents the unitary group, with $V(0)=V$, $W(0)=W$, such that $\parallel [V(s),W(s)]\parallel <2,\forall s\in [0,1]$, then
    \begin{equation}
        \label{eqs: homoinvar}
        \operatorname{Bott}\left(V(s),W(s)\right)=\operatorname{Bott}(V,W).
    \end{equation}
\end{theorem}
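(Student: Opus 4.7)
The plan is to show that $s \mapsto \operatorname{Bott}(V(s), W(s))$ is a continuous, integer-valued function on the connected interval $[0,1]$, and hence constant. This reduces the theorem to two tasks: (i) verifying that the principal logarithm used in the definition of $\operatorname{Bott}$ remains well-defined all along the homotopy, and (ii) establishing integrality and continuity of the resulting scalar.

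For task (i), I would first convert the commutator bound into a spectral gap for $U(s) := V(s)W(s)V(s)^{\dagger}W(s)^{\dagger}$. Using the algebraic identity $[V,W] = (VWV^{\dagger}W^{\dagger} - \mathbbm{1})WV$ together with unitarity of $WV$, one obtains $\|[V,W]\| = \|U - \mathbbm{1}\|$. Since $U$ is unitary, its spectrum lies on the unit circle and $\|U - \mathbbm{1}\| = \max_{\lambda \in \sigma(U)} |\lambda - 1|$. The hypothesis $\|[V(s),W(s)]\| < 2$ thus gives $|\lambda - 1| < 2$ for every $\lambda \in \sigma(U(s))$, which strictly excludes $\lambda = -1$ because $|-1-1|=2$. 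Consequently the principal branch of logarithm, with branch cut along the negative real axis, is holomorphic on an open neighborhood of $\sigma(U(s))$ for every $s \in [0,1]$, so $\log U(s)$ is unambiguously defined by the holomorphic functional calculus.

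For task (ii), I would first observe that $\det U(s) = \det V(s)\det W(s)\det V(s)^{-1}\det W(s)^{-1} = 1$, so the principal-branch eigenvalues $\theta_j(s) \in (-\pi,\pi)$ of $-i\log U(s)$ sum to an element of $2\pi\mathbb{Z}$, giving $\operatorname{Bott}(V(s),W(s)) \in \mathbb{Z}$. Continuity of $V(s), W(s)$ in the operator norm, combined with the uniform spectral gap just established and holomorphy of the principal logarithm on the complement of the branch cut, implies by holomorphic functional calculus that $s \mapsto \operatorname{Tr}\log U(s)$ is continuous on $[0,1]$. A continuous integer-valued function on a connected interval is constant, yielding $\operatorname{Bott}(V(s),W(s)) = \operatorname{Bott}(V,W)$ for all $s \in [0,1]$.

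The only substantive step is the first — the translation from the commutator-norm bound to the spectral condition $-1 \notin \sigma(U(s))$, which is what allows the principal logarithm to be applied consistently along the entire path. Once that identification is in hand, the remainder of the argument is a standard homotopy-invariance argument for integer-valued invariants of winding-number type, and the bound $\|[V(s),W(s)]\| < 2$ is exactly tight: at the threshold value $2$, an eigenvalue of $U(s)$ could cross $-1$ and jump the integer.
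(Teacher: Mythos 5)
The paper does not supply a proof of Theorem~\ref{thm: homoinvar}; it simply cites Toniolo's work, so there is no in-paper argument to compare against. Your proof is correct and is essentially the standard argument for homotopy invariance of the Bott index. The key steps all check out: the algebraic identity $[V,W]=(VWV^{\dagger}W^{\dagger}-\mathbbm{1})WV$ together with unitary invariance of the spectral norm gives $\|[V,W]\|=\|U-\mathbbm{1}\|$ with $U=VWV^{\dagger}W^{\dagger}$; normality of $U$ turns the norm bound into the spectral statement $-1\notin\sigma(U(s))$, so the principal logarithm is defined by holomorphic functional calculus along the entire path; $\det U(s)=1$ forces $\frac{1}{2\pi i}\operatorname{Tr}\log U(s)\in\mathbb{Z}$; and a continuous integer-valued map on $[0,1]$ is constant. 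Your concluding remark that the bound $<2$ is sharp, since at norm $2$ an eigenvalue may cross $-1$, is the right intuition.

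One small imprecision: you invoke a ``uniform spectral gap'' for continuity, but what you established is a pointwise gap at each $s$. This is easily repaired in either of two ways: (a) use compactness of $[0,1]$ and continuity of $s\mapsto\|[V(s),W(s)]\|$ to upgrade the strict inequality to $\|[V(s),W(s)]\|\le 2-\epsilon$ for some $\epsilon>0$, giving a genuine uniform gap and a single contour in the Cauchy integral valid for all $s$; or (b) note that continuity at each fixed $s_0$ only needs the gap at $s_0$ to persist locally, which follows automatically from continuity of $U(s)$ and openness of the condition $-1\notin\sigma(\cdot)$. Either fix is one line, so this is a cosmetic rather than substantive gap, and the overall proof stands.
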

\begin{proof}
    The proof of this theorem can be found in Refs.~\cite{BottIndexTwo2021toniolo,BottIndexUnitary2022toniolo}.
\end{proof}

This theorem provides the requirement for the Bott index to remain the same, which is key for us to demonstrate the robustness of $\bar{\mathcal{N}}$ and $\tilde{\mathcal{N}}$, as well as $\bar{\mathcal{N}}=\tilde{\mathcal{N}}$. Thus, let us consider the norm of $[m_z,q]$. We have the following inequalities
\begin{widetext}
\begin{equation}
\begin{aligned}
    \parallel [m_z,q] \parallel &\le \parallel [c_ze^{-2\pi i f(\bm r) },q] \parallel+\parallel [(1-c_z)e^{2\pi i f(\bm r) },q] \parallel\\
    &\le \parallel [c_z,q]e^{-2\pi i f(\bm r) } \parallel+\parallel c_z[e^{-2\pi i f(\bm r) },q] \parallel+\parallel [1-c_z,q]e^{2\pi i f(\bm r) } \parallel+\parallel (1-c_z)[e^{2\pi i f(\bm r) },q] \parallel\\
    & \le 2\parallel [c_z,q] \parallel + \parallel[e^{2\pi i f(\bm r) },q] \parallel+\parallel[e^{-2\pi i f(\bm r) },q] \parallel.
\end{aligned}
\end{equation}
\end{widetext}
We introduce the following theorem:
\begin{theorem}
    \label{thm: projector_norm}
    Given a chiral-symmetric Hamiltonian 
    \begin{equation}
      H=\begin{pmatrix} 0 & h\\
h^{\dagger}& 0 
\end{pmatrix},
    \end{equation} 
    with finite coupling $R$ and a spectral gap $\Delta E$ for all states except those that are localized at corners, $q$ is defined as follows:
    \begin{equation}
        q=U_A U_B^{\dagger},
    \end{equation}
    where we use the singular value decomposition   $h=U_A \Sigma U_B^{\dagger}$. The following relationship holds true,
    \begin{equation}
    \begin{aligned}
        &m=e^{2\pi i \frac{f(X,Y,Z,\dots)}{g(L)}},\\
&\parallel[m,q]\parallel \le \mathcal{O}\left(\frac{R}{L} \frac{\parallel H\parallel}{\Delta E} \right),
    \end{aligned}
    \end{equation}
    for polynomials $f$ and $g$ with $\operatorname{deg}(f)=\operatorname{deg}(g)$ and $f(\boldsymbol{x}_j)=\pm 1/2$ where $\boldsymbol{x}_j$ is the position of the $j$th corner.
\end{theorem}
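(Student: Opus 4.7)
The plan is to bound $\|[m,q]\|$ by lifting to the flattened chiral Hamiltonian $Q$ whose off-diagonal blocks are $q,q^{\dagger}$ and identifying $Q$ with $\operatorname{sign}(H)$ on the gapped subspace. With $M=\tau_0\otimes m$, the block structure gives $\|[m,q]\|\le\|[M,Q]\|$, so it suffices to bound $\|[M,Q]\|$. Three ingredients will enter: Lipschitz regularity of $m$, the finite hopping range $R$ of $H$, and the spectral gap $\Delta E$ away from the corner states.

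The first step is the commutator bound $\|[M,H]\|\le\mathcal{O}(R/L)\,\|H\|$. Because $\deg f=\deg g$, the diagonal entry of $m$ at site $n$ is Lipschitz in $n/L$ with constant $\mathcal{O}(1/L)$, so for two sites within distance $R$ the difference $(m)_{nn}-(m)_{n'n'}$ is $\mathcal{O}(R/L)$. Decomposing $H=\sum_{r\le R}H_{r}$ by hopping range and using $\|[m,H_{r}]\|\le\mathcal{O}(r/L)\|H_{r}\|$ yields the claim after summation. The second step passes from $H$ to $Q=\operatorname{sign}(H)$ via the integral representation
\begin{equation*}
Q \;=\; \frac{2}{\pi}\int_{0}^{\infty} H\,(E^{2}+H^{2})^{-1}\,dE,
\end{equation*}
together with the resolvent identity $[M,(E^{2}+H^{2})^{-1}]=-(E^{2}+H^{2})^{-1}[M,H^{2}](E^{2}+H^{2})^{-1}$. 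On the gapped subspace one has $\|(E^{2}+H^{2})^{-1}\|\le(E^{2}+\Delta E^{2})^{-1}$, so the integrand is bounded by $\mathcal{O}(R/L)\|H\|\cdot (E^{2}+\Delta E^{2})^{-1}$, and the elementary $E$-integral produces the factor $1/\Delta E$. This yields $\|[M,Q]\|\le\mathcal{O}(R/L)\,\|H\|/\Delta E$ on the gapped subspace.

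The main obstacle is the corner subspace $V_{c}$, where the gap closes and $q$ is not uniquely defined, so the integral representation breaks down. The remedy is to split the Hilbert space as $V_{g}\oplus V_{c}$, with the argument above applying on $V_{g}$. On $V_{c}$, the corner states are exponentially localized around the corners $\bm{x}_{j}$ with some length $\xi\ll L$, and the hypothesis $f(\bm{x}_{j})=\pm 1/2$ forces $m$ to act there as the scalar $e^{\pm i\pi}=-1$ up to corrections of order $\xi/L$. Consequently, $m$ almost commutes with every operator supported on $V_{c}$, with a residual error exponentially small in $L/\xi$ and therefore subleading relative to $R/L$. Any gauge ambiguity in $q$ on $V_{c}$ is absorbed into this same exponentially small error, and combining both subspaces delivers the stated bound $\|[m,q]\|\le\mathcal{O}(R/L)\,\|H\|/\Delta E$.
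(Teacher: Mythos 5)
Your route is genuinely different from the paper's. The paper reduces directly to the gapped case: it conjugates $mqm^{\dagger}q^{\dagger}$ by $U_A$, invokes from Ref.~\cite{Jiazheng2024} that the result is block-diagonal over bulk/edge/corner sectors with the corner block \emph{exactly} equal to $\mathbbm{1}$, and then applies the already-proven gapped ``weak version'' of the theorem to an effective Hamiltonian $\tilde H = R\,H\,R^{T}$ built from the bulk and edge states alone. You instead re-derive the commutator bound from scratch via a functional-calculus representation of $Q=\operatorname{sign}(H)$, and handle the corner sector by an approximate argument rather than an exact identity. Your plan is self-contained where the paper leans on a citation, which is a virtue, but there are two concrete gaps.

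First, the corner-sector estimate is misstated. With $f(\bm{x}_j)=\pm 1/2$, $\deg f=\deg g$, and corner states decaying over a microscopic length $\xi$, expanding $e^{2\pi i f(\bm{x})/g(L)}$ around $\bm{x}_j$ gives $m\psi = -\psi + \mathcal{O}(\xi/L)$: the residual is \emph{polynomially} small, of the same order $R/L$ as the rest, not ``exponentially small in $L/\xi$.'' This still fits inside the final $\mathcal{O}(R\|H\|/(L\Delta E))$ once one notes $\xi\sim R\|H\|/\Delta E$, but the reasoning as written is wrong. The paper sidesteps this entirely by citing the exact corner identity $U_{A,\text{corner}}^{\dagger}mU_{A,\text{corner}}U_{B,\text{corner}}^{\dagger}m^{\dagger}U_{B,\text{corner}}=\mathbbm{1}$, which also disposes of the gauge ambiguity of $q$ on the zero modes in one stroke, whereas you have to argue separately that the ambiguity is absorbed.

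Second, the functional-calculus estimate as described does not yield the stated power of $\|H\|/\Delta E$. Differentiating $H(E^{2}+H^{2})^{-1}$ with the product rule produces, besides $[M,H](E^{2}+H^{2})^{-1}$, the term $H(E^{2}+H^{2})^{-1}[M,H^{2}](E^{2}+H^{2})^{-1}$; bounding this naively with $\|(E^{2}+H^{2})^{-1}\|\le(E^{2}+\Delta E^{2})^{-1}$ and $\|[M,H^{2}]\|\le 2\|H\|\,\|[M,H]\|$ and integrating gives a contribution of order $\mathcal{O}\bigl((R/L)\,\|H\|^{3}/\Delta E^{3}\bigr)$, not $\|H\|/\Delta E$. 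To recover the advertised linear dependence you should instead use the partial-fraction decomposition
\begin{equation*}
H(E^{2}+H^{2})^{-1}=\tfrac{1}{2}\bigl[(H-iE)^{-1}+(H+iE)^{-1}\bigr],
\end{equation*}
whose resolvent commutators $\|(H\mp iE)^{-1}[M,H](H\mp iE)^{-1}\|\le \|[M,H]\|/(E^{2}+\Delta E^{2})$ integrate cleanly to $\|[M,H]\|/\Delta E$ (equivalently, a Riesz-projector contour integral for $\operatorname{sign}(H)=\mathbbm{1}-2P_{-}$ achieves the same). With these two repairs the overall structure of your argument is sound and would give an alternative, citation-free proof of the theorem.
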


\begin{proof}
   A weak version of this theorem with the same conclusion has been proven in our joint work~\cite{Jiazheng2024} for Hamiltonians with an energy gap for all states. Therefore, we only need to address situations where gapless corner states appear. We have
   \begin{equation}
       \begin{aligned}
           \parallel[m,q]\parallel &=\parallel m q m^{\dagger} q^{\dagger}-1\parallel \\
           &= \parallel U_A^{\dagger}m U_A U_B^{\dagger} m^{\dagger} U_B-1\parallel.
       \end{aligned}
   \end{equation}
   It has been proven that $U_A^{\dagger}m U_A U_B^{\dagger} m^{\dagger} U_B$ is a block-diagonal matrix in Ref.~\cite{Jiazheng2024}, where the equality 
   \begin{equation}
U_{A,\text{corner}}^{\dagger}m U_{A,\text{corner}}U_{B,\text{corner}}^{\dagger}m^{\dagger}U_{B,\text{corner}}=\mathbbm{1}
   \end{equation}
   has also been proven. $U_{A,\text{corner}}$ denotes the matrix composed of the $A$-subspace component of eigenstates residing in corners. Utilizing this result, we have
\begin{widetext}
   \begin{equation}
   \begin{aligned}
       \parallel U_A^{\dagger}m U_A U_B^{\dagger} m^{\dagger} U_B-1\parallel &= \parallel \bigoplus_{\beta\in \{\text{bulk},\text{edge},\text{corner}\}}U_{A,\beta}^{\dagger}m U_{A,\beta}U_{B,\beta}^{\dagger}m^{\dagger} U_{B,\beta} -1 \parallel\\ &= \operatorname{sup}_{\beta\in \{\text{bulk},\text{edge},\text{corner}\}}(\parallel U_{A,\beta}^{\dagger}m U_{A,\beta}U_{B,\beta}^{\dagger}m^{\dagger} U_{B,\beta} -1\parallel)\\&=\operatorname{sup}_{\beta\in \{\text{bulk},\text{edge}\}}(\parallel U_{A,\beta}^{\dagger}m U_{A,\beta}U_{B,\beta}^{\dagger}m^{\dagger} U_{B,\beta} -1\parallel).
   \end{aligned}
   \end{equation}
\end{widetext}
   We introduce an effective Hamiltonian $\tilde{H}$, which is composed of edge and bulk states and features both a spectral gap and a finite coupling range. The finite coupling range of $\tilde{H}$ is inherited from the finite coupling of $H$, as evidenced by the representation $H=\sum_{j\notin \text{corner}}E_j |\Psi_j\rangle \langle \Psi_j |$ and $\tilde{H}=R\cdot H \cdot R^{T}$, where $R$ is a rectangular matrix that projects onto the Hilbert space composed by edge and bulk states. Since this effective Hamiltonian features a gap for all states and the finite coupling range, we apply the weak version of the theorem to this Hamiltonian $\tilde{H}$. It follows that 
\begin{widetext}
   \begin{equation}
       \operatorname{sup}_{\beta\in \{\text{bulk},\text{edge}\}}(\parallel U_{A,\beta}^{\dagger}m U_{A,\beta}U_{B,\beta}^{\dagger}m^{\dagger} U_{B,\beta} -1\parallel) \le \mathcal{O}\left(\frac{R}{L} \frac{\parallel H\parallel}{\Delta E} \right).
   \end{equation}
\end{widetext}
   Thus, we have proven that
   \begin{equation}
       \parallel[m,q]\parallel \le \mathcal{O}\left(\frac{R}{L} \frac{\parallel H\parallel}{\Delta E} \right).
   \end{equation}
\end{proof}

According to this theorem, when a spectral gap $\Delta E$ exists for all states except those that are localized at corners, we have
\begin{equation}
\begin{aligned}
     \parallel [m_z,q] \parallel & \le 2\parallel [c_z,q] \parallel + \mathcal{O}\left(\frac{R}{L} \frac{\parallel H\parallel}{\Delta E} \right) \\
     & \le 2 \parallel [(1-s_z)/2,2P] \parallel + \mathcal{O}\left(\frac{R}{L} \frac{\parallel H\parallel}{\Delta E} \right) \\
    & \le 2 \parallel [s_z,P] \parallel + \mathcal{O}\left(\frac{R}{L} \frac{\parallel H\parallel}{\Delta E} \right),
\end{aligned}
\end{equation}
where 
\begin{equation}
     P=\left( \begin{matrix}
	\frac{1}{2}&		-\frac{q}{2}\\
	-\frac{q^{\dagger}}{2}&		\frac{1}{2}\\
\end{matrix} \right).
\end{equation}

Next, we prove the equivalence between $\parallel [s_z,P] \parallel<1$ and the existence of a gap of $P_z=Ps_zP$. 

Rewriting all operators in the eigenbasis of the Hamiltonian, we have
\begin{equation}
    P=\begin{pmatrix}
        1 & 0 \\
        0& 0
    \end{pmatrix},
\end{equation}
and
\begin{equation}
    s_z=\begin{pmatrix}
        A & D \\
        D^{\dagger}& B
    \end{pmatrix},
\end{equation}
where $A=A^{\dagger}$ and $B=B^{\dagger}$. Since $s_z^2=1$, it follows that
\begin{equation}
\begin{aligned}
      A^2+DD^{\dagger}&=1,\\
      B^2+D^{\dagger}D&=1, \\
      AD + DB &=0.
\end{aligned}
\end{equation}
We have 
\begin{equation}
    [s_z,P]=\begin{pmatrix}
        0 & -D \\
        D^{\dagger} & 0
    \end{pmatrix},
\end{equation}
and 
\begin{equation}
    P_z=Ps_zP=\begin{pmatrix}
        A & 0 \\
        0 & 0
    \end{pmatrix}.
\end{equation}
The norm of $[s_z,P]$ is equal to the largest eigenvalue of $[s_z,P]^{\dagger}[s_z,P]$.
\begin{equation}
    [s_z,P]^{\dagger}[s_z,P]=\begin{pmatrix}
        DD^{\dagger} & 0\\
        0 & DD^{\dagger}
    \end{pmatrix}.
\end{equation}
It follows that
\begin{equation}
    \sigma_{max}( [s_z,P]^{\dagger}[s_z,P] ) = \sigma_{max}( DD^{\dagger} ).
\end{equation}
Denoting the eigenvector with the largest eigenvalue of $DD^{\dagger}$ as $|\psi_{max}\rangle$, we have
\begin{equation}
\langle\psi_{max}|A^2+DD^{\dagger}|\psi_{max}\rangle=1=\langle\psi_{max}|A^2|\psi_{max}\rangle+\sigma_{max}.
\end{equation}
Noting that $A^2$ is positive semi-definite, we have 
\begin{equation}
    \begin{aligned}
        &\sigma_{max}=1-\langle \psi_{max}|A^2|\psi_{max}\rangle \le 1, \\
        &\parallel [s_z,P] \parallel \le 1.
    \end{aligned}
\end{equation}

Equality holds if and only if $\det(A)=0$. Thus, we show the equivalence between the existence of a gap of $P_z$ ($\det(A)\neq0$) and $\parallel [s_z,P] \parallel<1$, implying that the presence of both a gap of $P_z$ at zero and $\Delta E$ ensures that
\begin{equation}
    \parallel [m_z,q] \parallel<2.
\end{equation}
According to Theorem~\ref{thm: homoinvar}, we have demonstrated that the Bott index $\bar{\mathcal{N}}$ is robust with the existence of both a gap of $P_z$ at zero and $\Delta E$.

\begin{figure*}
\centering
\includegraphics[width=0.8\textwidth]{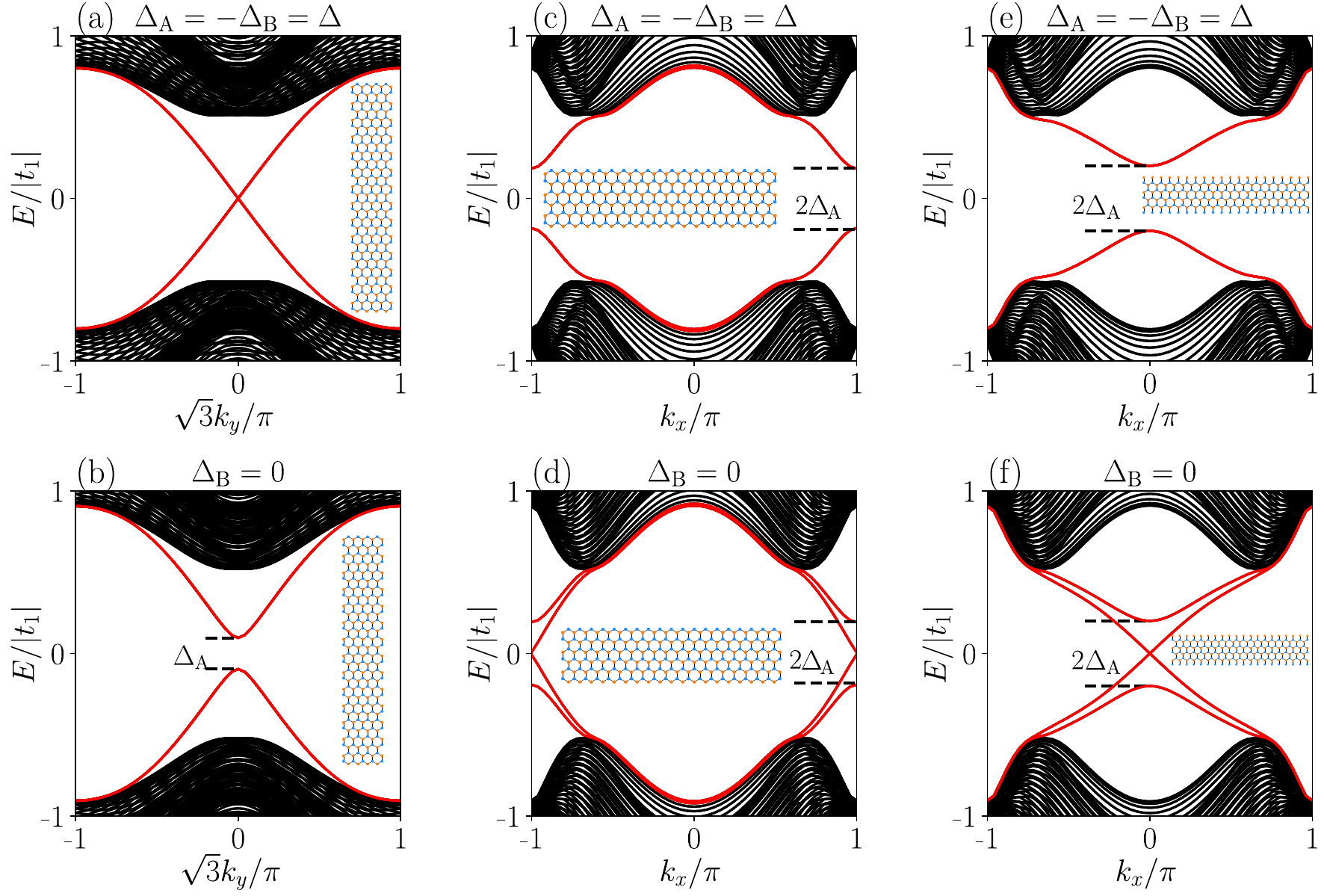}
\caption{Energy spectra for a ribbon with armchair edge in (a) and (b), zigzag edge in (c) and (d), and bearded edge in (e) and (f). The common parameters are taken as $t=1,\lambda_{\text{so}}=0.1,\Delta_{\text{A}}=0.2$, and $\mu=\lambda_v=\lambda_{\text{R}}=0$. }
\label{edge spectra}
\end{figure*}

Next, we consider the spin Bott index defined by Eq.~\eqref{pn}.

Utilizing Theorem~\ref{thm: projector_norm}, we can prove the following theorem:
\begin{theorem}
    \label{thm: spin_projector_norm}
    Given a Hamiltonian $H$ with finite coupling $R$ and a spectral gap $\Delta E$ for all states except states that are localized at corners, the spin-polarized projector $P_{\pm}$ is defined as follows:
    \begin{equation}
        P_{\pm}=\sum_{i}|\phi_{i}^{\pm}\rangle\langle \phi_{i}^{\pm}|,
    \end{equation}
    where $|\phi_{i}^{\pm}\rangle$ is the $i$th eigenstate of $P_z=Ps_zP$ with positive (negative) eigenvalues. If $P_z$ possesses a spectral gap $\Delta E_z$, the following relationship holds true,
    \begin{equation}
    \begin{aligned}
    &m=e^{ 2 \pi i \frac{f(X,Y,Z,\dots)}{g(L)}}, \\
        &\parallel[m,P_{\pm}]\parallel  \le \mathcal{O}\left(\frac{R}{L} \frac{\parallel H\parallel}{\Delta E \Delta E_z} \right),
    \end{aligned}
    \end{equation}
    for polynomials $f$ and $g$ with $\operatorname{deg}(f)=\operatorname{deg}(g)$ and $f(\boldsymbol{x_j})=\pm1/2$ where $\boldsymbol{x_j}$ denote the position of all corners.
\end{theorem}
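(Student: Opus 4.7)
The plan is to control $\|[m,P_\pm]\|$ by first expressing $P_\pm$ as an analytic spectral function of $P_z$ (using its gap $\Delta E_z$), then reducing the resulting commutator $[m,P_z]$ to the commutator $[m,P]$, which is already controlled by Theorem~\ref{thm: projector_norm} together with the relation $P=(\mathbbm{1}-Q)/2$. Intuitively, because $P_z=P s_z P$ has a spectral gap, $P_\pm$ is given by a holomorphic functional calculus of $P_z$ and hence inherits approximate locality from $P$, which is what allows it to nearly commute with the smoothly varying phase factor $m$.

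First I would write $P_\pm$ as a Riesz projector,
\begin{equation}
P_\pm \;=\; \frac{1}{2\pi i}\oint_{\Gamma_\pm} (z-P_z)^{-1}\,dz,
\end{equation}
with a contour $\Gamma_\pm$ in the complex plane at distance $\sim \Delta E_z$ from $\sigma(P_z)$ and enclosing only the positive (respectively negative) part of the spectrum of $P_z$. The resolvent identity then gives
\begin{equation}
[m,P_\pm] \;=\; \frac{1}{2\pi i}\oint_{\Gamma_\pm} (z-P_z)^{-1}\,[m,P_z]\,(z-P_z)^{-1}\,dz,
\end{equation}
which together with the off-diagonal structure of $[m,P_\pm]$ (the blocks $P_\pm[m,P_\pm]P_\pm$ and $(1-P_\pm)[m,P_\pm](1-P_\pm)$ vanish since $P_\pm^2=P_\pm$) yields a bound of the schematic form $\|[m,P_\pm]\|\lesssim \|[m,P_z]\|/\Delta E_z$.

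Second, I would reduce $[m,P_z]$ to $[m,P]$. Since $m=e^{2\pi i f(\bm r)/g(L)}$ acts as the identity on spin space, $[m,s_z]=0$, and the Leibniz rule gives
\begin{equation}
[m,P_z] \;=\; [m,P]\,s_z P + P\,s_z\,[m,P],
\end{equation}
so that $\|[m,P_z]\|\le 2\|[m,P]\|$. Finally, Theorem~\ref{thm: projector_norm} applied to $q$ (and to $q^\dagger$ via $\|[m,q^\dagger]\|=\|[m^\dagger,q]\|$, using unitarity of $m$ and that $m^\dagger$ has the same exponential form) transfers through $P=(\mathbbm{1}-Q)/2$ to give $\|[m,P]\|\le \mathcal{O}(R\|H\|/(L\,\Delta E))$. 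Chaining these estimates produces the stated bound.

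The step I expect to be the main obstacle is obtaining the sharp \emph{linear} dependence on $1/\Delta E_z$ asserted in the theorem statement: a naive contour bound using two resolvents and a contour of length $\mathcal{O}(1)$ only delivers $1/\Delta E_z^2$. Promoting this to $1/\Delta E_z$ requires exploiting the vanishing of the diagonal blocks of $[m,P_\pm]$ in the $\{P_+, (1-P_+)\}$ decomposition, so that effectively only one of the two resolvent factors contributes a gap-induced blow-up while the other stays uniformly bounded. Equivalently, one can replace the spectral step function by a smooth cutoff $\chi_\pm$ with $\|\chi_\pm'\|_\infty=\mathcal{O}(1/\Delta E_z)$ and invoke a Helffer--Sj\"ostrand representation of $[m,\chi_\pm(P_z)]$. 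Once this analytic refinement is in place, the remaining manipulations (Leibniz, resolvent identity, and the input from Theorem~\ref{thm: projector_norm}) are routine.
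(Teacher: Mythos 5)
Your overall proof architecture coincides with the paper's: both express $P_\pm$ as a Riesz projector over a contour around part of $\sigma(P_z)$, apply the resolvent commutator identity $[m,(P_z-z)^{-1}]=(P_z-z)^{-1}[P_z,m](P_z-z)^{-1}$, reduce $[m,P_z]$ to $[m,P]$ via the Leibniz rule $[m,P_z]=[m,P]s_zP+Ps_z[m,P]$ (using $[m,s_z]=0$), and finish by invoking Theorem~\ref{thm: projector_norm}. That part of your plan matches the paper step for step.

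The one place you diverge is precisely the step you flagged as the potential obstacle: how to turn ``two resolvents at distance $\sim\Delta E_z$'' into a \emph{linear} $1/\Delta E_z$ bound rather than $1/\Delta E_z^2$. You propose either (a) exploiting $P_\pm[m,P_\pm]P_\pm=(1-P_\pm)[m,P_\pm](1-P_\pm)=0$, so that only off-diagonal matrix elements survive with denominators $\lambda_i-\lambda_j\ge\Delta E_z$, or (b) a Helffer--Sj\"ostrand representation with a smooth cutoff of derivative $\mathcal{O}(1/\Delta E_z)$. Both are sound. The paper does something more pedestrian but equally effective: it keeps the crude bound $\|[m,P_-]\|\le\frac{1}{2\pi}\|[m,P_z]\|\oint_\Gamma\|(P_z-z)^{-1}\|^2|dz|$, then sends the far part of the contour to infinity (where the integrand decays as $|z|^{-2}$), reducing the integral to a straight line through the gap, which can be evaluated exactly as $\int_{-\infty}^{\infty}\bigl((\Delta E_z/2)^2+y^2\bigr)^{-1}dy=2\pi/\Delta E_z$. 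So there is no ``naive $1/\Delta E_z^2$'' to fight here: the contour-length-$\mathcal{O}(1)$ estimate simply never enters, because the integral is computed rather than bounded. Your alternatives would work and are more structural, but the paper's route is shorter and avoids the (correct, but extra) observation about vanishing diagonal blocks. Either way, the final chaining $\|[m,P_\pm]\|\le\|[m,P_z]\|/\Delta E_z\le 2\|[m,P]\|/\Delta E_z\le\mathcal{O}(R\|H\|/(L\,\Delta E\,\Delta E_z))$ is the same.
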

For convenience, we illustrate the proof for $P_{-}$, noting that $P_{+}$ can be handled analogously.
\begin{proof}
    The spin-polarized projector $P_{-}$ can be expressed as:
    \begin{equation}
    P_{-}=\frac{1}{2 \pi i} \oint_{\Gamma} dz \left(z-P_z\right)^{-1},
    \end{equation}
    where $\Gamma$ encloses the negative eigenvalues of $P_z$ in the complex plane.
    For $z \notin \sigma(P_z)$ and any matrix $A$ with the same size as $P_z$, we have the equality
    \begin{equation}
    \label{eqs:commutor_equa}
        \begin{aligned}
\left[A,(P_z-z)^{-1}\right] &=(P_z-z)^{-1}\left[(P_z-z),A\right](P_z-z)^{-1}  \\ &=(P_z-z)^{-1}\left[P_z,A\right](P_z-z)^{-1}.
\end{aligned}
    \end{equation}
    Using the above results, we have 
    \begin{equation}
        \parallel [m,P_{-}] \parallel \le \frac{1}{2 \pi}\parallel[m,P_z]\parallel \oint_{\Gamma} \parallel\left(P_z-z\right)^{-1}\parallel^2 |dz|,
    \end{equation}
with $\parallel\left(P_z-z\right)^{-1}\parallel^{2}= [\operatorname{dist}\left(z, \sigma(P_z)\right)]^{-2}$. $\operatorname{dist}\left(z, \sigma(P_z)\right)$ denotes the distance from a point $z$ to the region $\sigma(P_z)$. Taking the radius of $\Gamma$ to $\infty$, the loop-integral becomes
\begin{widetext}
    \begin{equation}
    \label{eqs:loop-int}
    \oint_{\Gamma} \parallel\left(P_z-z\right)^{-1}\parallel^2 |dz|=\oint_{\Gamma} [\operatorname{dist}\left(z, \sigma(P_z)\right)]^{-2} |dz|=\int_{-\infty}^{\infty} \frac{1}{(\frac{\Delta E_z}{2})^2+(y)^2}dy=\frac{2 \pi}{\Delta E_z},
\end{equation}
\end{widetext}
where $y$ denotes $\operatorname{Im}z$.
Also, we have
\begin{equation}
\begin{aligned}
    \parallel[m,P_z]\parallel &=\parallel [m,P]s_zP+Ps_z[m,P]\parallel \\ &\le 2\parallel[m,P] \parallel\le \mathcal{O}\left(\frac{R}{L} \frac{\parallel H\parallel}{\Delta E} \right),
    \end{aligned}
\end{equation}
where we use two inequalities of matrix norm Eqs.~(\ref{eqs:norminequal_1}, \ref{eqs:norminequal_2}), and in the last step we use Theorem~\ref{thm: projector_norm}.

Using the above findings, we have
\begin{equation}
        \parallel[m,P_{-}]\parallel\le \mathcal{O}\left(\frac{R}{L} \frac{\parallel H\parallel}{\Delta E \Delta E_z} \right),
\end{equation}
where $\mathcal{O}$ denotes the order of approximation.
\end{proof}

Applying this theorem to Eq.~\eqref{eqe2}, we have
\begin{equation}
\begin{aligned}
\label{eqs:s_minequa}
    &\parallel [M,\tilde{Q}_{\pm}] \parallel \le \mathcal{O}\left(\frac{R}{L} \frac{\parallel H\parallel}{\Delta E \Delta E_z} \right) ,\\
    &\parallel [m,\tilde{q}_{\pm}] \parallel \le \mathcal{O}\left(\frac{R}{L} \frac{\parallel H\parallel}{\Delta E \Delta E_z} \right)  .      
\end{aligned}
\end{equation}

In the following, we show that $\bar{\mathcal{N}}=\tilde{\mathcal{N}}$ unless the energy gap closes or the gap of $P_z$ closes by utilizing the above results.

Consider a Hamiltonian 
\begin{equation}
    H=H_{\text{cons}}+H_{\text{brok}},
\end{equation}
where $H_{\text{cons}}$ is the Hamiltonian with $s_z$ conserved and $H_{\text{brok}}$ denotes the term that breaks $s_z$ symmetry, namely $[s_z,H_{\text{brok}}]\neq 0$. 
We introduce a interpolating Hamiltonian 
\begin{equation}
    H(s)=H_{\text{cons}}+s H_{\text{brok}},\quad s\in [0,1].
\end{equation}
Since for $s\in [0,1]$, the energy gap $\Delta E$ and $\Delta E_z$ remain finite, we have
\begin{equation}
\begin{aligned}
        &\parallel [m_z,q(s)] \parallel&<2,\\
        &\parallel [m,\tilde{q}_{\pm}(s)] \parallel&<2,
\end{aligned}
\end{equation}
when $L\to \infty$. $q(s)$ and $\tilde{q}_{\pm}(s)$ are obtained by replacing $H$ with $H(s)$. 
Applying Theorem~\ref{thm: homoinvar} to $\bar{N}(s)=\operatorname{Bott}(m_z,q(s))$ and $\tilde{N}_{\pm}(s)=\operatorname{Bott}(m,\tilde{q}_{\pm}(s))$, we have
\begin{equation}
    \begin{aligned}
        &\bar{N}(1)=\bar{N}(0),\\
        &\tilde{N}_{\pm}(1)=\tilde{N}_{\pm}(0).
    \end{aligned}
\end{equation}
Since when $s=0$ ($[m,H(0)]=0$) we have $\bar{N}(0)=\tilde{N}(0)=\tilde{N}_{+}(0)-\tilde{N}_{-}(0)$, it follows that
\begin{equation}
\begin{aligned}
    \bar{N}(1)=\bar{N}(0)&=\tilde{N}(0)\nonumber\\
    &=\tilde{N}_{+}(0)-\tilde{N}_{-}(0)\nonumber\\
    &=\tilde{N}_{+}(1)-\tilde{N}_{-}(1)\nonumber\\
    &=\tilde{N}(1).
\end{aligned}
\end{equation}
Hence, we have shown that the Bott indices $\bar{\mathcal{N}}$ and $\tilde{\mathcal{N}}$ are robust and $\bar{\mathcal{N}}=\tilde{\mathcal{N}}$ unless the energy gap of the system closes or the gap of $P_z$ closes.

\begin{figure}
\centering
\includegraphics[width=0.5\textwidth]{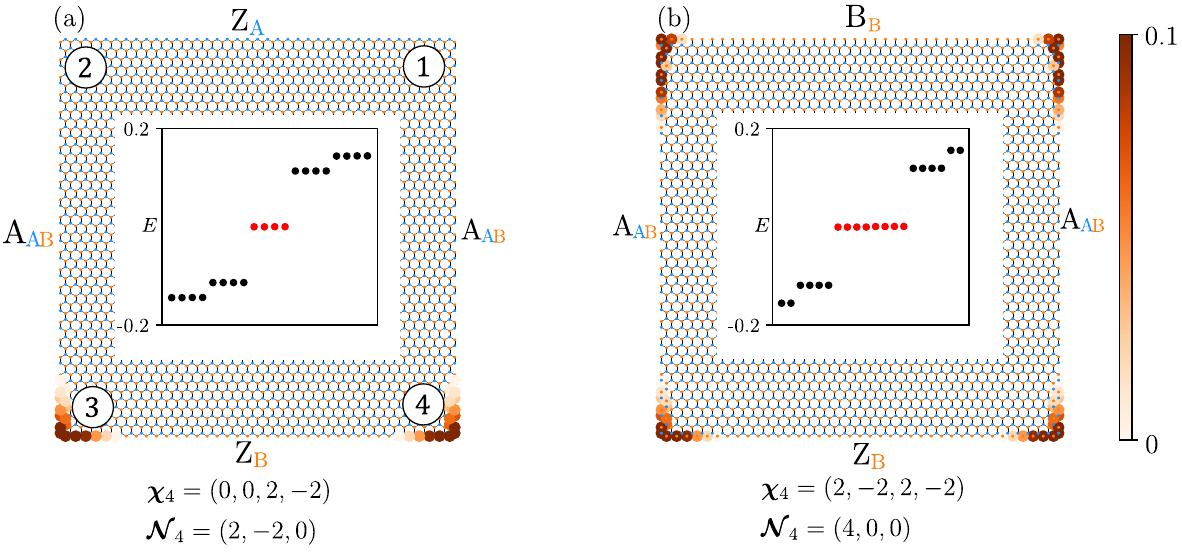}
\caption{Different patterns of MKP on the square-shaped systems. The circled number in (a) at corners labels the order of corners. The letters $\text{A}_{\text{AB}}$,  $\text{Z}_{\alpha}$,  $\text{B}_{\alpha}$ label the armchair, zigzag, and bearded edges, respectively, with $\alpha\in$ \{A, B\}.  The insets plot the energies close to zero. The vectors $\bm{\chi}_4$ and $\bm{\mathcal{N}}_4$ are used to describe and characterize the different patterns of MKP. The model parameters are taken as $t=1,\lambda_{so}=0.2,\Delta_{\text{A}}=0.5,\Delta_{\text{B}}=-0.25$.  }
\label{sq}
\end{figure}

\section{Edge theory }
\label{appendixd}
For simplicity in our edge theory analysis, we take $\lambda_\text{R}=\lambda_v=0$ in the Hamiltonian of Kane-Mele model with sublattice-dependent pairing potential. In the Nambu basis $\Psi_{\bm k}=(c_{\text{A},\bm k,\uparrow},c_{\text{A},\bm k,\downarrow},c_{\text{B},\bm k,\uparrow},c_{\text{B},\bm k,\downarrow}, c_{\text{A},-\bm k,\downarrow}^{\dagger},-c_{\text{A},-\bm k,\uparrow}^{\dagger},c_{\text{B},-\bm k,\downarrow}^{\dagger},-c_{\text{B},-\bm k,\uparrow}^{\dagger})$,
the Bloch Hamiltonian of the  Kane-Mele model with sublattice-dependent superconducting pairings can be  written as 
\begin{eqnarray}
&&{H}(\bm k)=f_x(\bm k)\tau_z\sigma_xs_0+f_y(\bm{k})\tau_z\sigma_ys_0+f_z(\bm{k})\tau_z\sigma_zs_z+\nonumber\\
&&\quad\quad\quad\quad\Delta_{\text{A}}\tau_x(\sigma_0+\sigma_z)/2s_0+\Delta_{\text{B}}\tau_x(\sigma_0-\sigma_z)/2s_0,\nonumber\\
&& f_x(\bm k)=t(1+\cos (k_x/2)\cos(\sqrt{3}k_y/2)),\nonumber\\
&&f_y(\bm k)=t\cos (k_x/2)\sin(\sqrt{3}k_y/2), \nonumber\\
&&f_z(\bm k)=\lambda_{\text{so}}(2\sin (k_x)-4\sin (k_x/2)\cos(\sqrt{3}k_y/2)).
\label{hab}
\end{eqnarray}
where Pauli matrices $\tau$, $\sigma$, and $s$ act on the particle-hole, orbital, and spin space, respectively. Here, we have chosen the lattice constant $a=1$.

\begin{figure*}
\centering
\includegraphics[width=0.8\textwidth]{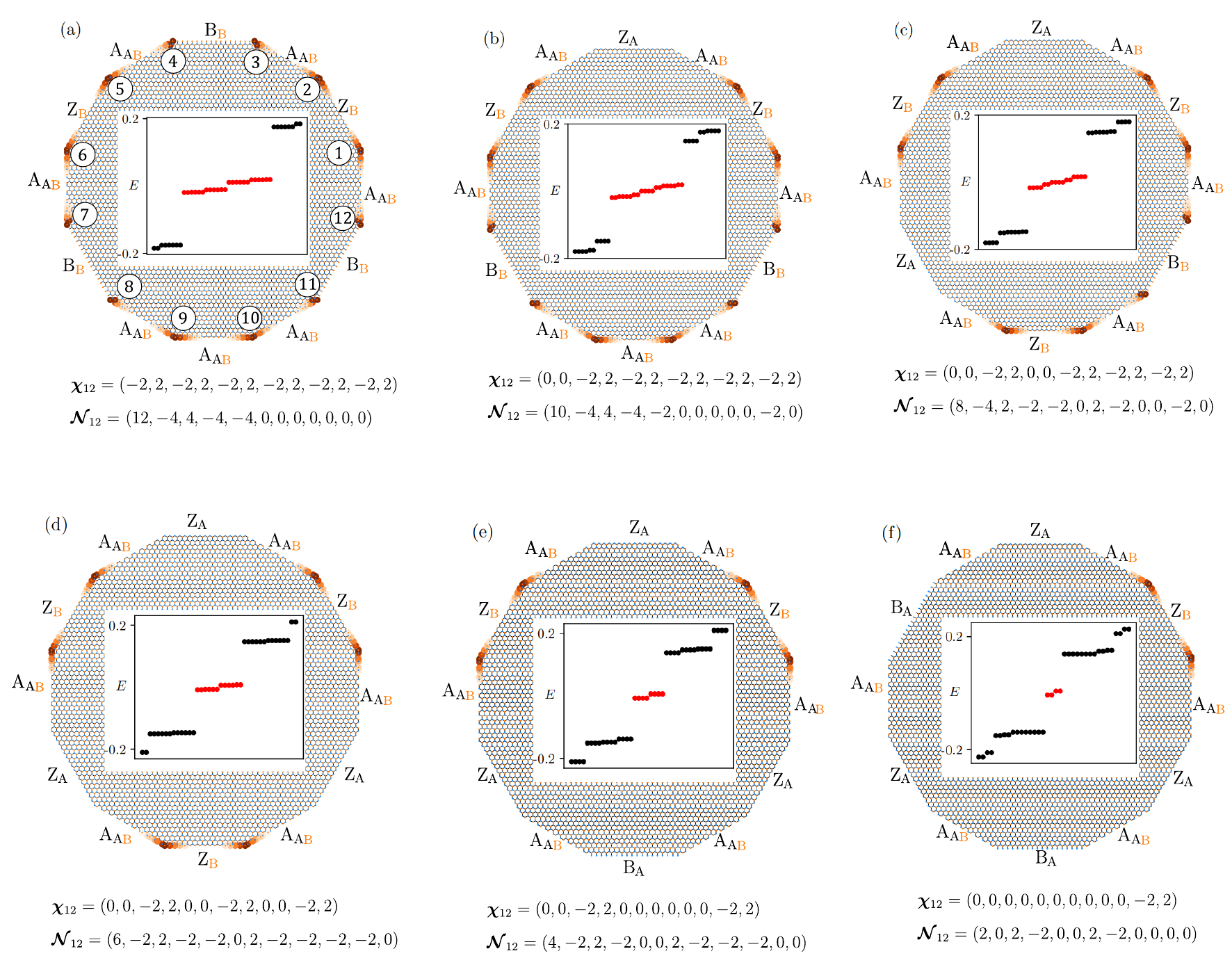}
\caption{(a-f) Different patterns of MKP on dodecagonal-shaped systems. 
The letters $\text{A}_{\text{AB}}$,  $\text{Z}_{\alpha}$,  $\text{B}_{\alpha}$ label the armchair, zigzag, and bearded edges, respectively, with $\alpha\in$ \{A, B\}. The circled number in (a) at corners labels the order of corners. The insets plot the energies close to zero. The vectors $\bm{\chi}_{12}$ and $\bm{\mathcal{N}}_{12}$ are used to describe and characterize the different patterns of MKP.  The same model parameters are used as those in Fig.~\ref{sq}.}
\label{12m}
\end{figure*}

For the armchair edge, the boundary Dirac point is located at $k_y=0$. Therefore, we expand 
${H}$ around $k_y=0$ (the expansion is only performed in the $k_y$ direction), leading to ${H}={H}_0+{H}_1$, where
\begin{widetext}
\begin{eqnarray}
&&{H}_0=t(1+\cos (k_x/2))\tau_z\sigma_xs_0+\lambda_{\text{so}}(2\sin (k_x)-4\sin (k_x/2))\tau_z\sigma_zs_z,\nonumber\\
&&{H}_1= \Delta_{\text{A}}\tau_x(\sigma_0+\sigma_z)/2s_0+\Delta_{\text{B}}\tau_x(\sigma_0-\sigma_z)/2s_0+\sqrt{3}t k_y\cos (k_x/2)\tau_z\sigma_ys_0.
\end{eqnarray}
\end{widetext}
The spinor part of the zero-energy states of ${H}_0$ for the semi-infinite system with $x \in (-\infty, 0)$ 
takes the form \cite{Pan2019}
\begin{eqnarray}
&& \psi_1=|\tau_z=1\rangle\otimes|\sigma_y=1\rangle\otimes|s_z=1\rangle, \nonumber\\
&&  \psi_2=|\tau_z=-1\rangle\otimes|\sigma_y=1\rangle\otimes|s_z=-1\rangle,\nonumber\\
&& \psi_3=|\tau_z=-1\rangle\otimes|\sigma_y=-1\rangle\otimes|s_z=1\rangle,\nonumber\\
&& \psi_4=|\tau_z=1\rangle\otimes|\sigma_y=-1\rangle\otimes|s_z=-1\rangle.
\end{eqnarray}
The $k_y$-dependent term couples
the zero-energy states away from $k_y = 0$, which gives rise to the dispersion of edge state. At exact Dirac point $k_y=0$, the above four zero-energy states are coupled by the term $\Delta_{\text{A}}/2\tau_x\sigma_0s_0+\Delta_{\text{B}}/2\tau_x\sigma_0s_0$, which leads to the superconducting gap with magnitude $(\Delta_{\text{A}}+\Delta_{\text{B}})/2$. This is consistent with the numerical results shown in  Fig.~\ref{edge spectra} (a) and \ref{edge spectra}(b).

For the zigzag edge, the boundary Dirac point is located at $k_x=\pi$. Therefore, we expand ${H}$ around at $k_x=\pi$ leading to ${H}={H}_0+{H}_1$, where
\begin{eqnarray}
&&{H}_0=t(1-\cos (\sqrt{3}k_y/2))\tau_z\sigma_xs_0-t\sin (\sqrt{3}k_y/2)\tau_z\sigma_ys_0,\nonumber\\
&&{H}_1= -2\lambda_{\text{so}}(\delta k_x+2\cos(\delta_{k_x})\cos(\sqrt{3}k_y/2))\tau_z\sigma_zs_z\nonumber\\
&&\quad\quad\quad+\Delta_{\text{A}}\tau_x(\sigma_0+\sigma_z)/2s_0+\Delta_{\text{B}}\tau_x(\sigma_0-\sigma_z)/2s_0.
\end{eqnarray}
Here $\delta k_x =k_x -\pi$. As ${H}_0$ only contains $\sigma_x$ and $\sigma_y$ in the sublattice space, the zero-energy states of ${H}_0$ 
are the eigenstates of $|\sigma_z= 1\rangle$ ($|\sigma_z= -1\rangle$)
when the zigzag edge is formed by the A (B) atom. The 
 term $-4\lambda_{\text{so}}\cos(\delta_{k_x})\cos(\sqrt{3}k_y/2))\tau_z\sigma_zs_z$ can not remove the zero-energy states but modify the spinor part of the zero-energy states. When $\lambda_{\text{so}}/t$ is small, the zero-energy states approximately take the form $|\sigma_z= 1\rangle$ or $|\sigma_z= -1\rangle$  \cite{Pan2019}. For convenience, we do not consider the effect of this term, which does not affect our conclusions. At Dirac point $k_x=\pi$, the term $-2\lambda_{\text{so}}\delta k_x\tau_z\sigma_zs_z$ vanish and the zero-energy states are coupled by the superconducting pairing, leading to the superconducting gap with magnitude $\Delta_{\text{A}}$ ($\Delta_{\text{B}}$) when the zigzag edge is formed by the A (B) atom. For the bearded edge, the boundary Dirac points are located at $k_x=0$, which are also gapped by the superconducting pairing with magnitude $\Delta_{\text{A}}$ ($\Delta_{\text{B}}$) when the bearded edge is formed by the A (B) atom. The zigzag and bearded edge states spectra are shown in Figs.~\ref{edge spectra} (c-d) and Figs.~\ref{edge spectra} (e-f), respectively.


\section{Majoaran Kramers Pairs in the square- and dodecagonal-shaped systems}
\label{appendixe}

As explained in the main text, Majoaran Kramers Pairs (MKP) can be realized at a corner shared by armchair and zigzag (bearded) edges when $(\Delta_{\text{A}}+\Delta_{\text{B}})\Delta_{\alpha}<0$ (the zigzag (bearded) edge is formed by the $\alpha$ atom).
As the angles between the armchair and zigzag (bearded) edges can take the values of $\pi/6$ ,$\pi/2$ and $5\pi/6$, we can engineer MKP by tailoring the systems into triangular, diamond,  square, and dodecagonal shapes.
For these systems, the edge-state energy gaps of armchair edges have the same sign, which constrains possible patterns of MKP realized through edge cleavage. In the following, we take the square- and dodecagonal-shaped systems as examples.

For a square-shaped system,  MKP can be realized at two adjacent corners [Fig.~\ref{sq}(a)] or all four corners [Fig.~\ref{sq}(b)]. To characterize the two patterns of MKP described by the $\bm{\chi}_{4}$ vector [Fig.~\ref{sq}], we calculate the spin Bott index ${\mathcal{N}}_{4}^{(1,2,3)}$, which are generated by choosing
\begin{eqnarray}
{f}_4^{(1)}=2xy/L^2,{f}_4^{(2)}=x/L,{f}_4^{(3)}=y/L,
\end{eqnarray}
where $(x,y)$ denotes the  coordinate of lattice sites and $L$ is the side length. In Figs.~\ref{sq} (a) and \ref{sq}(b), we present the numerical values of ${\mathcal{N}}_{4}$ which fully capture the different patterns of MKP described by $\bm{\chi}_4=\mathcal{M}^{-1}\cdot\left({\mathcal{N}}_4^{(1)},{\mathcal{N}}_4^{(2)},{\mathcal{N}}_4^{(3)},0\right)^{\mathrm{T}}$ with 
\begin{equation}
\mathcal{M}=\frac{1}{2}\begin{pmatrix}
    1 & -1 & 1 &-1 \\
    1 & -1 & -1 &1 \\
    1 & 1 & -1 &-1 \\
    1 & 1 & 1 &1
\end{pmatrix}.
\end{equation}

For the dodecagon-shaped system, MKP can be realized at $q$ corners by edge cleavage with $q=2,4,6,8,10,12$, as exemplified in Figs.~\ref{12m}(a)-~\ref{12m}(f).
 According to our theory, eleven spin Bott indices ${\mathcal{N}}_{12}^{(1,\cdots,11)}$ are needed to fully characterize the different patterns of MKP. The invariants ${\mathcal{N}}_{12}^{(1,\cdots,11)}$ can be generated by choosing \cite{Jiazheng2024}
 \begin{widetext}
\begin{eqnarray}
&&{f}_{12}^{(1)}=(-6 x^5y + 20x^3 y^3 - 6 xy^5)/2L^6,\nonumber\\
&&{f}_{12}^{(2)}=-(2/9) (\sqrt{3} x^6 - 21 x^5 y - 15 \sqrt{3} x^4 y^2 +  6 x^3 y^3 + 15 \sqrt{3} x^2 y^4 - 21 x y^5 - \sqrt{3} y^6)/2L^6,\nonumber\\
&&{f}_{12}^{(3)}=1/3 (3 x^6 -  (8 + 4\sqrt{3}) x^5 y - (15 + 16\sqrt{3}) x^4 y^2 +16 x^3 y^3 + (-15 + 16 \sqrt{3}) x^2 y^4+  (-8 + 4\sqrt{3}) x y^5 + 3y^6)/2L^6,\nonumber\\
&&{f}_{12}^{(4)}=-(2/9) (2 \sqrt{3} x^6 - 3 x^5 y + 18 \sqrt{3} x^4 y^2 + 42 x^3 y^3 - 18 \sqrt{3} x^2 y^4 - 3 x y^5 - 2 \sqrt{3} y^6)/2L^6,\nonumber\\
&&{f}_{12}^{(5)}=((9 + \sqrt{3}) x^6 + (24 + 12\sqrt{3}) x^5 y -(45 + 63 \sqrt{3}) x^4 y^2 - 48 x^3 y^3 -  (45 - 63 \sqrt{3}) x^2 y^4 + (24 -12 \sqrt{3}) x y^5 + (9 - \sqrt{3}) y^6)/18L^6,\nonumber\\
&&{f}_{12}^{(6)}=-\sqrt{2} y ((-1 + 6 \sqrt{3}) x^4 +  2 (7 - 4 \sqrt{3}) x^2 y^2 + (-1 + 2 \sqrt{3}) y^4)/6L^5,\nonumber\\
&&{f}_{12}^{(7)}=\sqrt{2} ((1 + \sqrt{3}) x^5 + 3 (-2 + \sqrt{3}) x^4 y -4 (-1 + \sqrt{3}) x^3 y^2 + 2 (9 - 2 \sqrt{3}) x^2 y^3 + (-5 + 3 \sqrt{3}) x y^4 + \sqrt{3} y^5)/6L^5,\nonumber\\
&&{f}_{12}^{(8)}=-\sqrt{2} (x^5 - 6 \sqrt{3} x^4 y - 14 x^3 y^2 + 8 \sqrt{3} x^2 y^3 + x y^4 - 2 \sqrt{3} y^5)/6L^5,\nonumber\\
&&{f}_{12}^{(9)}=1/3 \sqrt{2} ((-1 + \sqrt{3}) x^5 + (4 + 3 \sqrt{3}) x^4 y - 4 (1 + \sqrt{3}) x^3 y^2 +  2 (5 - 2 \sqrt{3}) x^2 y^3 + (5 + 3 \sqrt{3}) x y^4 + (-2 + \sqrt{3}) y^5)/2L^5,\nonumber\\
&&{f}_{12}^{(10)}=\sqrt{2} ((-1 + \sqrt{3}) x^5 - (4 + 3 \sqrt{3}) x^4 y - 4 (1 + \sqrt{3}) x^3 y^2 + 2 (-5 + 2 \sqrt{3}) x^2 y^3 + (5 + 3 \sqrt{3}) xy^4 - (-2 +  \sqrt{3}) y^5)/6L^5,\nonumber\\
&&{f}_{12}^{(11)}=-(1/3) \sqrt{2} (x - \sqrt{3} y) (x^4 + 7 \sqrt{3} x^3 y + 7 x^2 y^2 - \sqrt{3} x y^3 - 2 y^4)/2L^5,
\end{eqnarray}
 \end{widetext}
where $L$ is the length between the origin and a corner.
 In Figs.~\ref{12m}(a)-\ref{12m}(f), we present the numerical values of ${\mathcal{N}}_{12}$ which fully characterize the different patterns of MKP described by $\bm{\chi}_{12}$ and we have $\bm{\chi}_{12}=\mathcal{M}^{-1}\cdot\left({\mathcal{N}}_{12}^{(1)},\cdots,{\mathcal{N}}_{12}^{(11)},0\right)^{\mathrm{T}}$ with
 \begin{widetext}
\begin{equation}
\mathcal{M}=\frac{1}{2}\left(
\begin{array}{cccccccccccc}
 -1 & 1 & -1 & 1 & -1 & 1 & -1 & 1 & -1 & 1 & -1 & 1 \\
 1 & 1 & 1 & -1 & -1 & -1 & 1 & 1 & 1 & -1 & -1 & -1 \\
 -1 & -1 & 1 & 1 & -1 & 1 & -1 & -1 & 1 & 1 & -1 & 1 \\
 -1 & -1 & 1 & 1 & 1 & -1 & -1 & -1 & 1 & 1 & 1 & -1 \\
 1 & -1 & 1 & 1 & -1 & -1 & 1 & -1 & 1 & 1 & -1 & -1 \\
 -1 & -1 & -1 & -1 & -1 & -1 & 1 & 1 & 1 & 1 & 1 & 1 \\
 1 & 1 & 1 & 1 & 1 & -1 & -1 & -1 & -1 & -1 & -1 & 1 \\
 1 & 1 & 1 & 1 & -1 & 1 & -1 & -1 & -1 & -1 & 1 & -1 \\
 1 & 1 & 1 & -1 & 1 & 1 & -1 & -1 & -1 & 1 & -1 & -1 \\
 -1 & -1 & 1 & -1 & -1 & -1 & 1 & 1 & -1 & 1 & 1 & 1 \\
 -1 & 1 & -1 & -1 & -1 & -1 & 1 & -1 & 1 & 1 & 1 & 1 \\
 1 & 1 & 1 & 1 & 1 & 1 & 1 & 1 & 1 & 1 & 1 & 1 \\
\end{array}
\right).
\end{equation}
\end{widetext}

\begin{table}[htb]
\centering
\setlength\tabcolsep{5.3pt}
\renewcommand{\arraystretch}{1.5}
\caption{ The inversion eigenvalues $\xi_{\bm{k},n}$ of all the four occupied
BdG bands $n = 1, 2,3,4$, at
$\bm{k} =\bm{\Gamma},\bm{M_1},\bm{M_2},\bm{M_3}$, with $\bm{\Gamma}=(0, 0), \bm{M_1}=(0, \frac{2\pi}{\sqrt{3}}), \bm{M_2}=(\pi,\frac{\pi}{\sqrt{3}})$ and $\bm{M_3}=(-\pi,\frac{\pi}{\sqrt{3}})$.}
\begin{tabular}{|c|c|c|c|c|}
\hline
$n$&1&  2& 3 & 4 \\
\hline
$\bm{\Gamma}$&$+$&$+$&$+$&$+$\\
\hline
$\bm{M_1}$&$ -$&$ -$& $ -$&$ -$ \\
\hline
$\bm{M_2}$&$+$&$+$&$+$&$+$\\\hline
$\bm{M_3}$&$+$&$+$&$+$&$+$\\
\hline
\end{tabular}
\label{tab2}
\end{table}

\section{\texorpdfstring{$Z_4$}{} symmetry indicator}
\label{appendixg}
When $\Delta_{\text{A}}=-\Delta_{\text{B}}=\Delta$, the superconducting pairing potential can be represented by $\hat{\Delta}=\Delta\tau_x\sigma_zs_0$ under the basis of $\Psi_{\bm k}$. The system is an odd parity superconductor with the inversion symmetry defined as $\hat{I}=\tau_z \sigma_x$, which satisfies $\hat{I}{H}(\bm k)\hat{I}^{-1}={H}(-\bm k)$. At the inversion symmetry invariant momenta, the occupied Bloch states are the eigenstates of  $\hat{I}$. 
A $Z_4$ symmetry indicator can be defined as follows \cite{Hsu2020}
\begin{eqnarray}
\kappa=\frac{1}{4}\sum_{\bm k \in \mathrm{TRIM}} \sum_{n} \xi_{\bm k,n} \,\,\,\, \text{mod} \,\,4,
\label{seq7}
\end{eqnarray}
where $\xi_{\bm k,n}$ is the parity eigenvalue of the occupied $n$th BdG band at time-reversal invariant momenta $\bm k$ and $\kappa=2$ characterizes the inversion-protected MKP \cite{Hsu2020}. In Table \ref{tab2}, we present the parity eigenvalue of each occupied BdG band at the four time-reversal invariant momenta and we have $\kappa=2$ by applying Eq.~\eqref{seq7} to ${H}(\bm k)$.

\bibliography{reference1,ref-exact}

\end{document}